\definecolor{d_green}{RGB}{0,200,0}
\definecolor{d_red}{RGB}{200,0,00}
\newcommand{\cmark}{\textcolor{d_green}{\ding{51}}}%
\newcommand{\xmark}{\textcolor{d_red}{\ding{55}}}%
\newtheorem{example}{Example}
\newtheorem{theorem}{Theorem}
\newtheorem{proposition}{Proposition}
\newtheorem{corollary}{Corollary}
\def\SIGMA#1{\Sigma^{p}_{#1}}
\def\PI#1{\Pi^{p}_{#1}}
\newcommand{\putaway}[1]{}
\def\prefeq{\succeq}
\def\pref{\succ}
\def\_{\mbox{-}}
\def\~{\mbox{$\sim$}}
\def\set#1{\{#1\}}		
\def\to{\rightarrow}		
\def\false{\bot}
\def\Bool{\mathbb{B}}               
\def\Rat{\mathbb{Q}}                
\def\phi{\varphi}
\def\calG{\mathcal{G}}  
 \def\calK{\mathcal{K}} 
  \def\calO{\mathcal{O}}
\def\calV{\mathcal{V}}  
\newcommand{\acv}{\vec{v}}
\newcommand{\obs}{\vec{o}}
\newcommand{\obsset}{\Bool^{\calO}}
\newcommand{\env}[1]{\textsc{env}(#1)}
\newcommand{\nasheq}{\textsc{ne}}
\newcommand{\consistent}[1]{\textsc{cons}(#1,\obs)}
\newcommand{\ind}[1]{\textsc{ind}(#1)}
\newcommand{\init}[1]{\textsc{init}(#1)}
\newcommand{\soft}[1]{\textsc{soft}(#1)}
\newcommand{\hard}[1]{\textsc{hard}(#1)}
\newcommand{\obseq}{=_{\calO}}
\newcommand{\obsneq}{\neq_{\calO}}
\newcommand{\resobs}{\vert_{\calO}}
\newcommand{\indev}[1]{\textsc{in}(#1,D)}
\newcommand{\resp}[1]{\mathsf{resp}(#1)}
\title{Principal-Agent Boolean Games}
\author{
David Hyland$^1$
\and
Julian Gutierrez$^2$
\And
Michael Wooldridge$^1$
\affiliations
$^1$University of Oxford, UK\\
$^2$Monash University, Australia\\
\emails
\{david.hyland, mjw\}@cs.ox.ac.uk,
julian.gutierrez@monash.edu
}
\begin{document}

\maketitle

\begin{abstract}
    We introduce and study a computational version of the principal-agent problem -- a classic problem in Economics that arises when a principal desires to contract an agent to carry out some task, but has incomplete information about the agent or their subsequent actions. The key challenge in this setting is for the principal to design a contract for the agent such that the agent's preferences are then aligned with those of the principal. We study this problem using a variation of Boolean games, where multiple players each choose valuations for Boolean variables under their control, seeking the satisfaction of a personal goal, given as a Boolean logic formula. In our setting, the principal can only observe some subset of these variables, and the principal chooses a contract which rewards players on the basis of the assignments they make for the variables that are observable to the principal. The principal's challenge is to design a contract so that, firstly, the principal's goal is achieved in some or all Nash equilibrium choices, and secondly, that the principal is able to verify that their goal is satisfied. In this paper, we formally define this problem and completely characterise the computational complexity of the most relevant decision problems associated with it.
\end{abstract}

\section{Introduction}
The principal-agent problem is a classic problem in economics. It arises when a principal aims to sub-contract a task to an agent (or agents) while having only imperfect information about the agent(s). The two key classes of principal-agent problems are \textit{adverse selection} and \textit{moral hazard}. In a setting of adverse selection, the principal may have partial information about the agents' types, which may be relevant to their ability to carry out the delegated task. For example, when an employer hires a new employee, they have only partial information about the skills and aptitude of the employee. Their skills and aptitude will then play a part in the performance of the delegated task. In moral hazard settings, upon which we focus in the present paper, the principal is unable to observe the actions taken by agents subsequent to the contract. In particular, the principal may be unable to directly verify that the terms of the contract have been respected. For example, suppose we want to hire a cleaner to ensure that a particular building is kept clean, although we are only able to inspect the building at infrequent intervals: if the cleaner desires to minimise effort, why would they then not simply clean up only when an inspection is imminent? The key challenge here is to design a contract for the agent that aligns the incentives of the agent with those of the manager, so that the manager can have confidence that the agent will be rationally motivated to expedite the task at hand, even if the manager is unable to directly verify that this is indeed the case. 

Principal-agent problems of this type are extremely relevant in Artificial Intelligence as well as in Multi-Agent Systems: the venerable Contract Net protocol, for example, deals with precisely the situation where a principal delegates tasks to an agent~\cite{smith:80a}, and although they were not considered in the original work, informational asymmetries between the principal and the agent may of course occur. 

Our present work investigates a computational version of the principal-agent problem based on the Boolean games framework~\cite{harrenstein:2001a}. Boolean games are non-cooperative games in which players each control a finite set of Boolean variables, and the pure strategies available to a player correspond to the set of possible assignments of truth or falsity to those variables. Preferences are captured by assigning each player a propositional logic formula that the player desires to see satisfied. In the variant of Boolean games that we adapt in the present paper, secondary preferences are defined by costs associated with assignments of truth or falsity -- a player first seeks to satisfy their goal, and secondarily seeks to minimise costs~\cite{wooldridge2013incentive}.

Classical work in the area of moral hazard in teams fails to take into account the potential lexicographic or qualitative nature of agent preferences. To our knowledge, our work is the first to study multi-agent moral hazard problems in a game where agents have lexicographic qualitative and quantitative preferences. In our model, a principal has a goal they desire to see accomplished, expressed, as with the agents' goals, as a propositional formula. The principal chooses a contract that rewards individual agents on the basis of their observable variables. The fundamental question we ask is whether it is possible for the principal to design a contract such that, if agents rationally make choices (i.e., make choices that form a game-theoretic equilibrium), then, firstly, the principal's goal will be accomplished, and secondly, the principal can formally and automatically \emph{verify} that the goal is accomplished, under the assumption that agents have acted rationally.

We derive complexity results for two classes of problems: Can the principal formally verify whether or not their objective, given as a propositional logic formula over the observable variables in the game, is satisfied on some/all observations and can the principal design a contract such that their objective is satisfied on some/all possible observations.

\section{Related Work} \label{sec:related_work}

\vspace*{1ex}\noindent\textbf{Moral Hazard:} Investigations of the moral hazard problem were initially motivated by the desire to better understand the relationship between risk and incentives in insurance contracts~\cite{pauly1968economics,arrow1971insurance}. Early models of moral hazard sought to study how risks can be shared optimally between principal and agent using contracts~\cite{michael1976theory,holmstrom1979moral,shavell1979risk,grossman1992analysis}.\footnote{See \cite{georgiadis2022contracting} for an overview of moral hazard studies.} Extensions of these models to settings with \textit{multiple agents} and a single principal were also developed, in which the presence of competition between agents must be taken into account~\cite{holmstrom1982moral,itoh1991incentives,che2001optimal}. More recently, computational approaches to contract design have extended the original models in various ways and taken an algorithmic lens on computing optimal contracts \cite{dutting2019simple,dutting2021complexity,alon2021contracts,guruganesh2021contracts,gan2022optimal,dutting2023ambiguous}. Multi-agent contracts have also been studied from this perspective, which introduces the challenge of reasoning about combinatorially large action spaces \cite{babaioff2006combinatorial,emek2012computing,dutting2022multi,castiglioni2023multi}.
Common among these studies is the assumption that agent preferences are quantitative, and partial observability is primarily modelled as a noisy signal of the agents' actions, which are often represented as a quantitative degree of effort. In contrast, our model allows agents to have qualitative goals specified as propositional formulae, which are prioritised over their quantitative objectives and constrains the agents' decision-making to a discrete action space. Furthermore, the principal's objective in our model is a propositional logic formula, as opposed to a function of the observation signal. An example of a setting where this may be more appropriate is a scenario in which a set of tasks are divided among a group of autonomous agents, each of whom are capable of reliably completing their assigned tasks. Each task is associated with a cost of completion, and the agents must decide which tasks to complete based on their objectives and the contract payments that are offered under the limited principal's observational abilities.

\vspace*{1ex}\noindent\textbf{Mechanism Design:} There are computational approaches to mechanism design in which approximation algorithms~\cite{nisan1999algorithmic}, program synthesis~\cite{narayanaswamiautomating}, and satisfiability checking~\cite{maubert2021strategic,mittelmann2022automated}, among others, have been proposed to automatically construct optimal mechanisms. These works assume a stronger degree of control by the principal over the agents' interactions and their utilities.

\vspace*{1ex}\noindent\textbf{Lexicographic Qualitative and Quantitative Preferences:}
Finally, there have been several studies devoted to understanding solution concepts in games with lexicographic qualitative and quantitative preferences~\cite{chatterjee2005mean,bloem2009better,gutierrez2017nash,gutierrez2021equilibria,bulling2022combining}, but these do not consider problems of incentive design. Other previous work has studied the idea of a principal introducing a taxation scheme into games where players have a very similar preference structure to the one we consider~\cite{wooldridge2013incentive,harrenstein:2014a,harrenstein:2017a,levit2019incentive}. 
Additionally,~\cite{gutierrez_et_al:LIPIcs:2019:10924} study a related problem called `equilibrium design', but they consider players with only quantitative objectives. A key assumption in such previous works was that the principal can \textit{fully observe} the outcome $\acv$ selected by the agents. However, the principal does not have this luxury in many real-world situations, and it is this issue that motivates the study of the principal-agent problem. Other studies of incomplete information in Boolean games introduce partial observability at the level of the agents, as opposed to an external principal who seeks to influence their behaviour \cite{agotnes2013verifiable,clercq2014possibilistic}.

\section{Preliminaries} \label{sec:prelim}
\vspace*{1ex}\noindent\textbf{Notation:}
 Where $S$ is a set, we denote the powerset of $S$ by $\mathbf{2}^S$. We make use of propositional logic over Boolean variables throughout this study. In this context, we will let $\Bool = \set{\top,\false}$ denote the set of Boolean literals. Where $S$ is a set of Boolean variables, we let $\Bool^S$ denote the set of possible \textit{valuations}, which are combinations of truth assignments to each variable in $S$. Where $T$ is also a set of Boolean variables and $v \in \Bool^S$ is a valuation in $S$, we will let $v\vert_{T}$ denote the \textit{restriction} of $v$ to $T$, which is the Boolean valuation $v' \in \Bool^{S\cap T}$ that agrees with $v$ on all variables in $S \cap T$.
 
\vspace*{1ex}\noindent\textbf{Boolean Games:}
In this study, we will model a multi-agent moral hazard problem in the context of one-shot Boolean Games with costs, as presented in \cite{wooldridge2013incentive}. A \textit{Boolean Game with costs} (henceforth ``Boolean game'' or simply ``game'') is a structure \[ B = (N,\Phi,(\Phi_i)_{i \in N}, (\gamma_i)_{i \in N}, (c_i)_{i \in N}), \] where: $N = \set{1,\ldots,n}$ is a set of \textit{agents}; $\Phi =~\set{p_1,p_2,\ldots,p_m}$ is a finite set of $m \geq n$ \textit{Boolean variables}; For each $i \in N$, the set $\Phi_i$ is the non-empty set of \textit{Boolean variables uniquely controlled by agent} $i$, such that the collection $(\Phi_i)_{i\in N}$ forms a partition of $\Phi$; For each  $i \in N$, formula $\gamma_i$ is the \textit{goal} of agent $i$, which is represented as a propositional formula over $\Phi$; For each $i \in N$, with $c_i: \Bool^{\Phi} \to \Rat_+$ we represent the \textit{cost function} of $i$, which assigns to each valuation $\acv$ of the variables in $\Phi$ a non-negative cost $c_i(\acv)$, indicating the cost incurred by $i$ under the valuation. We let $c_i^*$ denote the maximal cost that  $i$ can incur under their $c_i$.

A \textit{strategy} $v_i : \Phi_i \to \set{\top,\bot}$ for agent $i$ is defined as an assignment of truth values to every variable in $\Phi_i$. Here, we will use $1$ and $0$ interchangeably with $\top$ and $\bot$, respectively, in the context of strategies. For each agent $i \in N$, let $V_i$ be their set of possible strategies. A \textit{strategy profile} or \textit{outcome} is then a tuple of strategies $\acv = (v_1,\ldots,v_n)$, where agents select their strategies simultaneously in the absence of communication and knowledge of what strategies other agents selected. Additionally, given an agent $i \in N$, a strategy profile $\acv$, and an individual strategy $v_i' \in \Bool^{\Phi_i}$ for $i$, we let $(\acv_{-i},v_i')$ be the strategy profile obtained by replacing $v_i$ in $\acv$ with $v_i'$, i.e., the strategy profile $(v_1,\ldots,v_{i-1},v_i',v_{i+1},\ldots,v_n)$. Given a game $B$, we also define its \textit{costless} version as the game $B^0 = (N,\Phi,(\Phi_i)_{i \in N}, (\gamma_i)_{i \in N}, (c_i^0)_{i \in N})$ which is as game $B$, except that $c_i^0(\acv) = 0$ for all $\acv \in \Bool^{\Phi}$ and $i \in N$.


For a strategy profile $\acv$ and a propositional logic formula $\phi$ over the set of Boolean variables $\Phi$, we write $\acv \models \phi$ to mean that $\acv$ satisfies $\phi$, where $\models$ denotes the propositional satisfaction relation. If it holds that $\acv \models \gamma_i$ for some agent $i \in N$, we say that agent $i$'s goal is satisfied under $\acv$ and $i$ is a \textit{winner} under $\acv$. Any agent whose goal is not satisfied under a strategy profile $\acv$ will be referred to as a \textit{loser} under $\acv$. We write $W(\acv)$ and $L(\acv)$ to denote the sets of winners and losers under a given strategy profile $\acv$, respectively.

\vspace*{1ex}\noindent\textbf{Utilities, Preferences, and Nash Equilibrium:}
In order to model both qualitative \textit{and} quantitative preferences, we consider a model where agent preferences are defined according to a lexicographic ordering on goal satisfaction and their incurred costs. Specifically, an agent $i$ prefers any outcome in which their goal $\gamma_i$ is satisfied over any outcome in which it is not, no matter what cost they incur. Secondarily, each agent prefers to minimise the cost that they incur. Formally, we define the \textit{utility function} $u_i$ for each agent $i \in N$ given a strategy profile $\acv$ as follows:
\[ u_i(\acv) = 
\left\{\begin{array}{ll}
1+c_i^*-c_i(\acv)   & \mbox{if $\acv \models \gamma_i$}\\
-c_i(\acv)          & \mbox{otherwise.}
\end{array}\right.
\]
Observe that if an agent~$i$ achieves their goal, then their utility will lie in the range~$[1,c_i^* + 1]$; if their goal is not achieved, then their utility will lie in the range~$[-c_i^*,0]$. Preference relations~$\prefeq_i$ over outcomes for each~$i \in N$ are defined in the obvious way: $\acv_1 \prefeq_i \acv_2$ if and only if $u_i(\acv_1) \geq u_i(\acv_2)$, with indifference relations~$\sim_i$ and strict preference relations~$\pref_i$ defined in the usual way.

The primary solution concept we will work with is the pure strategy Nash equilibrium. Formally, a strategy profile $\acv$ is a \textit{Nash equilibrium} of a Boolean game $B$ with cost function $c$ if there is no agent $i \in N$ and strategy $v_i' \in V_i$ such that $(\acv_{-i},v_i') \pref_i \acv$. If such a strategy exists for an agent $i \in N$, we say that $i$ has a \textit{beneficial deviation} from $\acv$ to $(\acv_{-i},v_i')$. Where $B$ is a game, we write~$\nasheq(B)$ to denote the set of Nash equilibria of~$B$. Additionally, if~$\phi$ is a Boolean logic formula over $\Phi$, we let $\nasheq_{\phi}(B) = \set{\acv \in \nasheq(B)\ \mid\ \acv \models \phi}$ denote the set of Nash equilibria of $B$ which satisfy formula~$\phi$.

\section{The Principal-Agent Verification Problem}

In this study, we aim to formulate and analyse the multi-agent moral hazard problem in the context of Boolean games with costs, where hidden actions are modelled by the principal only being able to observe the values of some subset $\calO \subseteq \Phi$, known as the \textit{observable set}. An \textit{observation} by the principal is defined to be an assignment of truth values to all variables in its observable set $\calO$, and is denoted by $\obs$.\footnote{A special case of this is when we have $\calO = \emptyset$, where the principal cannot observe the values of any of the Boolean variables in the game. In this case, we can say that the principal makes a \textit{null observation}, which is written as $\obs = ()$.} In general, a single observation $\obs$ may correspond to more than one underlying strategy profile $\acv$ if not all of the variables are fully observable. Thus, the observable set $\calO$ naturally gives rise to the possibility for strategy profiles to be \textit{indistinguishable} from one another. This can be expressed formally as an observational equivalence relation $\obseq$, defined over the set of observations $\obsset$ such that for all strategy profiles $\acv, \acv' \in V$, we say that $\acv$ is \textit{indistinguishable} from $\acv'$, written $\acv \obseq \acv'$, if it is the case that $\acv\resobs = \acv'\resobs$. If it is \textit{not} the case that $\acv \obseq \acv'$, then we say that strategy profile $\acv$ is \textit{distinguishable} from strategy profile $\acv$ and write $\acv \obsneq \acv'$ in that case. 


The problem faced by the principal is to design a contract so that they are able to ensure/verify that their objective has been accomplished on some or all of the Nash equilibria under the contract. To begin with, we will first analyse the simple case where the principal does not intervene, but simply asks whether they can \textit{verify} whether or not some or all Nash equilibria consistent with any observation will satisfy their goal, given as a propositional logic formula~$\phi$. The value of this problem is that it provides a baseline against which the principal can assess the benefit of any potential contract that is designed. In order to formally state these problems, we first make precise the notion of consistency. Formally, we say that a strategy profile $\acv$ is \textit{consistent} with an observation $\obs$ if it is the case that $\acv\resobs = \obs$. Then, we define the \textit{consistent set} of an observation $\obs$ in a given Boolean game $B$ in the following way: $
\consistent{B} = \set{\acv \in \nasheq(B)\ \mid\ \acv \mbox{ is consistent with } \obs}.$
Finally, we define the set $\env{B,\obs,\phi} = \nasheq_{\phi}(B) \cap \consistent{B}$ to be the set of Nash equilibria in game $B$ that satisfy $\phi$ and are consistent with $\obs$. The first decision problem of interest is then stated as follows:

\begin{quote}
\underline{\textsc{E-Nash Verifiability}}:\\
\emph{Given}: Game $B$, observation $\obs \in \obsset$, formula $\phi$.\\
\emph{Question}: Is it the case that $ \env{B,\obs,\phi} \neq \emptyset$?
\end{quote}

Whilst an answer to the E-Nash Verifiability problem provides useful information to the principal about the \textit{possibility} of their goal being achieved on some Nash equilibrium that is consistent with an observation, it does not make any guarantees as to whether or not the principal's goal has been achieved, given what they have observed. To obtain such a guarantee, we require the natural counterpart to this problem -- the \textit{A-Nash Verifiability} problem\footnote{Notation E- and A- are used to indicate existential and universal reasoning about the set of Nash equilibria in a game respectively.}:

\begin{quote}
\underline{\textsc{A-Nash Verifiability}}:\\
\emph{Given}: Game $B$, observation $\obs \in \obsset$, formula $\phi$.\\
\emph{Question}: Is the case that $\consistent{B} \subseteq \nasheq_{\phi}(B)$?
\end{quote}

If the answer to the A-Nash Verifiability problem is ``yes'', then the principal can be assured that any Nash equilibrium consistent with their observation satisfies their goal. The following complexity results of the E- and A-Nash Verifiability problems can be readily derived from existing results relating to Boolean games~\cite{wooldridge2013incentive}:

\begin{proposition}\label{proposition:nash-verifiability}
\textsc{E-Nash Verifiability} is $\SIGMA2$-complete, and \textsc{A-Nash Verifiability} is $\PI2$-complete.
\end{proposition}


\begin{example}
To illustrate the differences between the E-Nash and the A-Nash verifiability problems, consider the game $B_1$ consisting of two players with $\Phi_1 = \set{p_1}$, $\Phi_2 = \set{p_2}$, goals $\gamma_1 = \top$, $\gamma_2 = \neg p_1$, cost function as given by the table on the left in Figure \ref{fig:example_games}, and observable set $\calO = \set{p_1}$. Now suppose that the principal's objective is $\phi = p_2$ and consider the E- and A-Nash verifiability problems for each observation $\obs \in \obsset = \set{0,1}$. Firstly, we can identify the Nash equilibria in $B_1$ as being $(0,0)$ and $(0,1)$, with only the latter satisfying $\phi$. Under the observation $\obs_1 = (1)$, there is no Nash equilibrium in $B_1$ that is consistent with $\obs_1$, so the answer to E-Nash verifiability is ``no'' and the answer to A-Nash verifiability is ``yes'' for $\obs_1$. However, under the observation $\obs_2 = (0)$, it is the case that $\consistent{B} \cap \nasheq_{\phi}(B) \neq \emptyset$ but not the case that $\consistent{B} \subseteq \nasheq_{\phi}(B)$. Thus, the answer to E-Nash verifiability for $\obs_2$ is ``yes'' while it is ``no'' for A-Nash verifiability. 
\end{example}

\begin{figure}
    \centering
    \begin{tabular}{cc|c|c|}
      & \multicolumn{1}{c}{} & \multicolumn{1}{c}{$1$}  & \multicolumn{1}{c}{$0$} \\\cline{3-4}
      & $1$ & $(2,0)$ & $(3,0)$ \\\cline{3-4}
      & $0$ & $(1,1)$ & $(1,1)$ \\\cline{3-4}
    \end{tabular}
    \begin{tabular}{cc | c | c |}
      & \multicolumn{1}{c}{} & \multicolumn{1}{c}{$1$}  & \multicolumn{1}{c}{$0$} \\ \cline{3-4}
      & $1$ & $(10,1)$ & $(10,2)$ \\ \cline{3-4}
      & $0$ & $(1,2)$ & $(1,1)$ \\ \cline{3-4}
    \end{tabular}
    \caption{Two cost matrices representing the agents' costs incurred under different strategy profiles in two different Boolean games. The game represented by the left figure is referred to as $B_1$ and discussed in Example 1. The game represented by the right figure is referred to as $B_2$ and discussed in Example 2. Player $1$ (the row player) controls variable $p_1$ and player $2$ (the column player) controls variable $p_2$.}
    \label{fig:example_games}
\end{figure}

\section{Contract Design}

The verification problem studied thus far allows the principal to obtain helpful information about whether or not they can be confident that their goal was satisfied under any possible observation that is consistent with at least one Nash equilibrium. However, in the classical formulations of the moral hazard problem, a principal is tasked with \textit{designing} a contract to align the agents' incentives with their own. In this study, the principal's limitation lies not in an inability to accurately observe the outcome of its observables, but rather in their ability to only observe some subset of the actions chosen, as well as the agents' lack of willingness to compromise on satisfying their goal, no matter what the incentives are. For this, we first introduce the concept of contracts, which amount to an alteration of the costs incurred by the agents for taking various actions. This is similar to the notion of taxation schemes introduced in \cite{wooldridge2013incentive}, but with the critical distinction that in our setting, the principal can only reward agents based on the outcomes of \textit{observable} variables. 

Formally, a contract $\kappa = (\kappa_1,\ldots,\kappa_n)$ is a tuple of functions~$\kappa_i : \obsset \to \Rat_{\geq 0}$ for each agent $i \in N$ that map each possible observation by the principal to a non-negative rational number. Let $\calK$ denote the set of all contracts for a given game and let $\kappa_i^*$ denote the highest possible payment that the principal offers to each agent $i \in N$ over all possible observations. Then, given a game $B$ and an observation set $\calO$, a contract $\kappa$ gives rise to a new utility function $u_i^\kappa$ for each agent $i \in N$ given a strategy profile $\acv$:
\[
u_i^{\kappa}(\acv) = \left\{\begin{array}{ll}
1 + c_i^* + \kappa_i(\acv\resobs) - c_i(\acv)   & \mbox{if $\acv \models \gamma_i$}\\
-\kappa_i^* + \kappa_i(\acv\resobs) - c_i(\acv)          & \mbox{otherwise.}
\end{array}\right.
\]
Defining the utility of agents under a given contract in this way captures two desirable properties. Firstly, it preserves the lexicographic preferences of agents -- each agent is guaranteed to achieve a strictly positive utility if their goal is satisfied, whereas they can achieve a utility of at most zero if it is not. Secondly, regardless of whether an agent's goal is satisfied, their utility is strictly increasing in the payment received from the principal and strictly decreasing in the costs they incur. Given a game $B$, an observable set $\calO$ and a contract $\kappa$, we define the \textit{Boolean game induced by} $\kappa$ to be the game $B^{\kappa}$ which is as game $B$, but where each agent $i$'s utility is given by $u_i^{\kappa}$. Given a contract $\kappa$, an agent $i \in N$ and two strategy profiles $\acv^1,\acv^2$, we write $\acv^1 \prefeq_i^{\kappa} \acv^2$ if and only if $u_i^{\kappa}(\acv^1) \geq u_i^{\kappa}(\acv^2)$ and define $\pref_i^{\kappa}$ in the obvious manner.


\subsection*{Inducing and Eliminating Equilibria}

Due to the partial observability of the agents' actions and the fact that agents will always prefer to achieve their goal than not to do so, there are limits to the ability of a principal to either introduce or eliminate equilibria from a given game. As a special case, note that if all agents' actions are fully observable ($\calO = \Phi$), then the contract problem can be characterised by the game's hard and soft equilibria -- respectively those that can not be eliminated under any contract and those that can~\cite{wooldridge2013incentive,harrenstein:2014a,harrenstein:2017a}. Thus, we will focus on the case where $\calO \subset \Phi$. In this scenario, it is not in general possible to characterise the manipulability of a given game~$B$ by analysing the qualitative structure of the underlying costless Boolean game, as the principal is no longer able to eliminate the cost considerations for all agents by designing a contract so that each agent's quantitative payoff is constant regardless of their actions. Thus, the ability of the principal to induce or eliminate equilibria by means of contract design will be critically dependent on the initial cost functions of the agents. 

\begin{example}
As an example of the importance of initial costs in this model, consider a game $B_2$ consisting of two players with goals $\gamma_1 = \gamma_2 = \top$, cost function as given by the table on the right in Figure \ref{fig:example_games}, and observable set $\calO = \set{p_1}$. Now suppose again that the principal's objective is $\phi = p_2$. With $\bar{p}_x$, $x\in\{1,2\}$, we denote $\neg p_x$. Firstly, note that the only Nash equilibrium outcome of $B_2$ is $(0,0)$. Now, because the principal cannot observe $p_2$ directly, the only way for them to ensure that $p_2$ is rationally chosen by player 2 is to offer a contract to \textit{player $1$} to make the choice of setting $p_1 = 1$ more rational than setting it to $0$. However, in order to do so, such a contract $\kappa$ must satisfy $\kappa_1(p_1) > \kappa_1(\bar{p}_1) + 9$, under which the new unique Nash equilibrium becomes $(1,1)$. This illustrates a scenario in which it is lucrative for an employee to control what is observable by the employer so as to benefit from indirect incentivisation.
\end{example}
    
Before proceeding, we find it useful to define some terminology~\cite{harrenstein:2014a}. Firstly, we define the set $\init{B} = \nasheq(B^0)$ of \textit{initial equilibria} to be the set of Nash equilibria in the costless game $B^0$. Formally, we will say that given a game $B$ and an observable set $\calO$, a strategy profile $\acv$ is an \textit{inducible equilibrium} if there exists a contract $\kappa$ such that $\acv \in \nasheq(B^{\kappa})$, and we let $\ind{B,\calO}$ be the set of inducible equilibria of a game $B$ with respect to an observable set $\calO$. Additionally, we will say that a strategy profile $\acv$ is a \textit{hard equilibrium} (with respect to $\calO$), written $\acv \in \hard{B,\calO}$, if $\acv \in \nasheq(B)$ and there is no contract $\kappa$ such that $\acv \notin \nasheq(B^{\kappa})$. Complementarily, we say that a strategy profile $\acv$ is a \textit{soft equilibrium}, written $\acv \in \soft{B,\calO}$, if $\acv \in \nasheq(B) \setminus \hard{B,\calO}$. It is easy to verify that the previously defined sets are related in the following way:

\begin{proposition}\label{proposition:set_relations}
For all Boolean games $B$ and observable sets $\calO$, it holds that $\hard{B,\calO} \subseteq \ind{B,\calO}$, $\soft{B,\calO} \subseteq \ind{B,\calO}$, and $\nasheq(B^{\kappa}) \subseteq \ind{B,\calO} \subseteq \init{B}$ for all contracts $\kappa$.
\end{proposition}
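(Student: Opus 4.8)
The plan is to establish each of the four containments in turn, working directly from the definitions of $\init{B}$, $\ind{B,\calO}$, $\hard{B,\calO}$, and $\soft{B,\calO}$, since all are purely definitional consequences once the right observation is made. I would begin with the most substantive inclusion, $\nasheq(B^{\kappa}) \subseteq \init{B}$ for every contract $\kappa$, and note that $\ind{B,\calO} \subseteq \init{B}$ is then immediate: an inducible equilibrium is by definition a member of $\nasheq(B^{\kappa})$ for some $\kappa$, so once the former is shown, the latter follows by taking the union over $\kappa$.

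The key step is therefore to show that a Nash equilibrium of any induced game $B^{\kappa}$ is already a Nash equilibrium of the costless game $B^0$. The idea is that a contract $\kappa$ rewards an agent $i$ only on the basis of the restriction $\acv\resobs$, i.e. only through variables in $\calO$, and agent $i$'s own controlled variables $\Phi_i$ need not lie in $\calO$ — but more to the point, I would argue that the lexicographic structure of $u_i^{\kappa}$ means the qualitative component (whether $\gamma_i$ is satisfied) always dominates. Concretely: suppose $\acv \in \nasheq(B^\kappa)$ but $\acv \notin \nasheq(B^0)$, so some agent $i$ has a beneficial deviation $v_i'$ in $B^0$. In $B^0$ all costs are zero, so the only way $(\acv_{-i},v_i')$ can be strictly preferred by $i$ is that it satisfies $\gamma_i$ while $\acv$ does not (goal satisfaction is the only thing that varies in $B^0$). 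But the deviation $v_i'$ changes only variables in $\Phi_i$; I need to check whether this can change $\acv\resobs$ and hence the payment $\kappa_i$. It can, if $\Phi_i \cap \calO \neq \emptyset$. Here is where the construction of $u_i^\kappa$ does the work: whenever $\acv \models \gamma_i$ the utility is at least $1$ (since $\kappa_i \geq 0$ and $c_i(\acv) \leq c_i^*$), and whenever $\acv \not\models \gamma_i$ the utility is at most $0$ (since $\kappa_i(\acv\resobs) \leq \kappa_i^*$ and $c_i \geq 0$). Hence if $(\acv_{-i},v_i') \models \gamma_i$ and $\acv \not\models \gamma_i$, then $u_i^\kappa(\acv_{-i},v_i') \geq 1 > 0 \geq u_i^\kappa(\acv)$, so $i$ has a beneficial deviation in $B^\kappa$ as well, contradicting $\acv \in \nasheq(B^\kappa)$. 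This gives $\nasheq(B^\kappa) \subseteq \nasheq(B^0) = \init{B}$.

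The remaining three inclusions are short. For $\hard{B,\calO} \subseteq \ind{B,\calO}$: if $\acv \in \hard{B,\calO}$ then in particular $\acv \in \nasheq(B)$, and $B = B^{\kappa}$ for the trivial contract $\kappa \equiv 0$ (note $u_i^{\mathbf 0}$ reduces to $u_i$ up to an additive shift of $-\kappa_i^* = 0$ on losers, hence induces the same preference $\pref_i$), so $\acv \in \nasheq(B^{\mathbf 0})$ witnesses inducibility. For $\soft{B,\calO} \subseteq \ind{B,\calO}$: by definition $\soft{B,\calO} \subseteq \nasheq(B)$, and the same trivial-contract argument applies. And $\nasheq(B^{\kappa}) \subseteq \ind{B,\calO}$ is immediate from the definition of $\ind{B,\calO}$, since any $\acv \in \nasheq(B^\kappa)$ is inducible via that very $\kappa$.

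I do not expect any real obstacle; the one point that needs care is the claim that the zero contract induces exactly the game $B$ (so that $\nasheq(B^{\mathbf 0}) = \nasheq(B)$), which requires observing that the additive constants distinguishing $u_i^{\mathbf 0}$ from $u_i$ — namely the shift from $-c_i(\acv)$ to $-\kappa_i^* + \kappa_i(\acv\resobs) - c_i(\acv) = -c_i(\acv)$ when $\kappa_i \equiv 0$ — are in fact null, so the utility functions coincide outright and the equilibrium sets are literally equal. The only other subtlety is making sure the lexicographic-dominance argument in the main step is stated for the correct direction (a deviation that newly satisfies the goal), since a deviation that loses the goal or preserves goal-status is never beneficial in $B^0$ where costs are identically zero.
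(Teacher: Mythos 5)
Your proof is correct. The paper gives no proof of this proposition at all (it is stated as ``easy to verify''), and your argument supplies exactly the intended definitional verification: the zero contract induces a game identical to $B$, which witnesses inducibility for Nash, hard, and soft equilibria, while the sign separation between winners' utilities ($\geq 1$) and losers' utilities ($\leq 0$) under any contract $\kappa$ shows that a goal-achieving deviation in $B^0$ remains beneficial in $B^{\kappa}$, giving $\nasheq(B^{\kappa}) \subseteq \init{B}$ and hence $\ind{B,\calO} \subseteq \init{B}$.
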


To reason about the principal's ability to design effective contracts, a method of analysing the inducibility and eliminability of sets of strategy profiles is needed. Following the approach of \cite{harrenstein:2017a}, we will introduce different types of potential deviations between strategy profiles. Suppose that $\acv, \acv' \in V$ are two distinct strategy profiles such that $\acv' = (\acv_{-i},v_i')$ for some $i \in N$ and $v_i' \in V_i \setminus \set{v_i}$. Then we say that:
\begin{itemize}
    \item $\acv'$ is an \textit{initial deviation} for $i$ from $\acv$, written $\acv \rightharpoonup_i \acv'$, if $i \in W(\acv) \Rightarrow i \in W(\acv')$.
    \item $\acv'$ is an \textit{inducible deviation} for $i$ from $\acv$, written $\acv \to_i \acv'$, if there is a contract $\kappa$ such that $\acv' \pref_i^{\kappa} \acv$. Under such a contract, we say that $\kappa$ \textit{induces} a deviation from $\acv$ to $\acv'$.
    \item $\acv'$ is a \textit{soft deviation} for $i$ from $\acv$, written $\acv \leftrightarrows_i \acv'$, if both $ \acv \to_i \acv'$ and $ \acv' \to_i \acv$. 
    \item $\acv'$ is a \textit{hard deviation} for $i$ from $\acv$, written $\acv \rightrightarrows_i \acv'$, if $ \acv \to_i \acv'$ but not $\acv' \to_i \acv$.
    
\end{itemize}

Note that just because a strategy profile is an initial deviation from another strategy profile, this does not necessarily imply that a contract exists which could \textit{induce} that deviation, i.e., convert it into a beneficial deviation. Instead, we need the stronger notion of \textit{inducible deviations}, which is, in turn, used to define soft and hard deviations. The following proposition characterises the relationship between initial and inducible deviations. The proofs of the following three Propositions proceed via a straightforward case analysis using the definitions of the previously defined concepts, so we omit them for the sake of space. For the backward implications, contracts can be designed to offer $c_i^*+1$ to an agent $i$ in order to make a particular observation more appealing than another observation, under which no reward is offered.

\begin{proposition}\label{proposition:induce_dev}
Let $B$ be a game, $\calO \subseteq \Phi$ an observable set, and $\acv,\acv' \in V$ be two distinct strategy profiles such that $\acv' = (\acv_{-i},v_i')$ for some $i \in N$ and $v_i' \in V_i \setminus \set{v_i}$. Then, $\acv \to_i \acv'$ if and only if $\acv \rightharpoonup_i \acv'$ and one of the following conditions hold: 1) $\acv \obsneq \acv'$, or 2) $\acv \obseq \acv'$ and $\acv' \pref_i \acv$.
\end{proposition}

This result highlights the fact that the ability to design contracts to induce deviations between two strategy profiles is wholly determined by the initial cost and goal structure of the game in the case where the strategy profiles are indistinguishable. Next, we present a characterisation of the set of inducible equilibria of a game, which tells a principal when it is possible to design a contract $\kappa$ such that a particular outcome becomes a Nash equilibrium under $\kappa$:

\begin{proposition}\label{proposition:inducible_eqm}
Let $B$ be a game, $\calO \subseteq \Phi$ an observable set, and $\acv$ a strategy profile. Then, $\acv \in \ind{B,\calO}$ if and only if the following conditions hold: 1) $\acv \in \init{B}$; and 2) For all agents $i \in N$, and all choices $v_i' \in V_i$ such that $(\acv_{-i},v_i') \obseq \acv$ and $i \in W(\acv) \Leftrightarrow i \in W(\acv_{-i},v_i')$, we have that $c_i(\acv_{-i},v_i') \geq c_i(\acv)$.
\end{proposition}

Now, we focus on characterising the sets of hard and soft equilibria, which allows a principal to check whether an undesirable initial equilibrium can be eliminated from a game.

\begin{proposition}\label{proposition:soft}
Let $B$ be a game, $\calO \subseteq \Phi$ an observable set, and $\acv$ a strategy profile. Then, $\acv \in \soft{B,\calO}$ if and only if the following conditions hold: 1) $\acv \in \init{B}$; and 2) There is agent $i \in N$ and a strategy $v_i' \in V_i \setminus \set{v_i}$ such that $(\acv_{-i},v_i') \obsneq \acv$ and $i \in W(\acv_{-i},v_i') \Leftrightarrow i \in W(\acv)$.
\end{proposition}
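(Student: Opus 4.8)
The plan is to prove both directions directly from the definition $\soft{B,\calO}=\nasheq(B)\setminus\hard{B,\calO}$, which unpacks to: $\acv\in\soft{B,\calO}$ exactly when $\acv\in\nasheq(B)$ and \emph{some} contract $\kappa$ satisfies $\acv\notin\nasheq(B^{\kappa})$ --- that is, $\acv$ is a Nash equilibrium of $B$ that a contract can destroy. The one nontrivial input beyond the definitions will be Proposition~\ref{proposition:induce_dev}, which pins down exactly which single-agent deviations a contract can convert into beneficial deviations.

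For the ``only if'' direction, suppose $\acv\in\soft{B,\calO}$. Then $\acv\in\nasheq(B)$, so by Proposition~\ref{proposition:set_relations} also $\acv\in\init{B}$, which is condition~1. For condition~2, fix a contract $\kappa$ witnessing $\acv\notin\nasheq(B^{\kappa})$; then some agent $i$ has a beneficial deviation under $\kappa$ to $\acv':=(\acv_{-i},v_i')$ with $v_i'\neq v_i$, i.e.\ $\acv\to_i\acv'$. By Proposition~\ref{proposition:induce_dev}, $\acv\rightharpoonup_i\acv'$ and either $\acv\obsneq\acv'$, or else $\acv\obseq\acv'$ together with $\acv'\pref_i\acv$. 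The second alternative is impossible, since $\acv'\pref_i\acv$ already exhibits a beneficial deviation for $i$ in $B$ itself, contradicting $\acv\in\nasheq(B)$. Hence $\acv\obsneq\acv'$. Finally, $\acv\rightharpoonup_i\acv'$ gives $i\in W(\acv)\Rightarrow i\in W(\acv')$; conversely, if $i$ were a loser at $\acv$ but a winner at $\acv'$, then $u_i(\acv')\geq 1>0\geq u_i(\acv)$, so once more $\acv'$ would be a beneficial deviation in $B$, contradicting $\acv\in\nasheq(B)$. Thus $i\in W(\acv)\Leftrightarrow i\in W(\acv')$, establishing condition~2.

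For the ``if'' direction, assume conditions~1 and~2, and let $i$, $v_i'$ and $\acv'=(\acv_{-i},v_i')$ witness condition~2, so that $\acv\obsneq\acv'$ and $i\in W(\acv)\Leftrightarrow i\in W(\acv')$. The construction is to take the contract $\kappa$ with $\kappa_j\equiv 0$ for every $j\neq i$, $\kappa_i(\acv'\resobs)=c_i^{*}+1$, and $\kappa_i(\obs)=0$ for every other observation $\obs$; this prescription is consistent precisely because $\acv\resobs\neq\acv'\resobs$. A short computation, split on whether $i\in W(\acv)$ (equivalently $i\in W(\acv')$), then gives $u_i^{\kappa}(\acv')-u_i^{\kappa}(\acv)\geq 1>0$ in both cases --- when $i$ wins, the reward $c_i^{*}+1$ dominates the bounded cost difference, and when $i$ loses it exactly offsets the $-\kappa_i^{*}$ term --- so $i$ has a beneficial deviation in $B^{\kappa}$, hence $\acv\notin\nasheq(B^{\kappa})$, i.e.\ $\acv\notin\hard{B,\calO}$. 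It then remains to confirm that $\acv\in\nasheq(B)$, after which $\acv\in\nasheq(B)\setminus\hard{B,\calO}=\soft{B,\calO}$.

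The step I expect to be the main obstacle is the case analysis in the ``only if'' direction: extracting condition~2 from the bare existence of a destroying contract forces one to argue, through Proposition~\ref{proposition:induce_dev}, that the deviation the contract exploits must both cross an observation boundary and leave agent $i$'s win/lose status unchanged, and both facts follow only because $\acv$ is already a Nash equilibrium of $B$ (any deviation failing either property would be beneficial in $B$ with no contract at all). The complementary subtlety, in the ``if'' direction, is recovering $\acv\in\nasheq(B)$, where the cost functions of the non-deviating agents must be accounted for in tandem with condition~1.
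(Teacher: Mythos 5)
Your overall route is the same as the paper's: both directions turn on the characterisation of inducible deviations (Proposition~3), and the eliminating contract in the ``if'' direction is the same $c_i^*+1$ reward attached to the observation $\acv'\resobs$. Your ``only if'' direction is complete and correct --- you argue directly from a destroying contract where the paper argues by contraposition, but the content (the deviation must cross an observation boundary and preserve win/lose status, both because $\acv\in\nasheq(B)$) is identical.

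The gap is the step you yourself flag and then leave open: in the ``if'' direction you must still show $\acv\in\nasheq(B)$, since $\soft{B,\calO}=\nasheq(B)\setminus\hard{B,\calO}$. This does not follow from conditions~1 and~2. Condition~1 only gives $\acv\in\init{B}=\nasheq(B^0)$, which constrains goal satisfaction but says nothing about cost-reducing deviations that preserve an agent's win/lose status. Concretely: take two agents with $\gamma_1=\gamma_2=\top$, $\calO=\set{p_1}$, $c_1$ equal to $1$ when $p_1=\top$ and $0$ otherwise, and all other costs zero. The profile $\acv=(\top,\bot)$ satisfies both conditions (agent~1's flip of $p_1$ is a distinguishable, winner-preserving deviation), yet agent~1 already strictly prefers $(\bot,\bot)$ in $B$ itself, so $\acv\notin\nasheq(B)$ and hence $\acv\notin\soft{B,\calO}$. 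So the step you defer is not merely unproved but unprovable from the stated hypotheses; the characterisation needs condition~1 strengthened to $\acv\in\nasheq(B)$ (as in the companion Corollary for hard equilibria), under which the rest of your argument goes through. For what it is worth, the paper's own proof of the reverse direction elides exactly the same point, concluding $\acv\in\soft{B,\calO}$ immediately after exhibiting the destroying contract.
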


Because $\hard{B} = \nasheq(B) \setminus \soft{B}$ by definition, a characterisation of the set of hard equilibria in a game follows immediately from Proposition \ref{proposition:soft}:

\begin{corollary}\label{corollary:hard}
Let $B$ be a game, $\calO \subseteq \Phi$ a non-empty observable set, and $\acv$ a strategy profile. Then, $\acv \in \hard{B,\calO}$ if and only if the following conditions hold: 1) $\acv \in \nasheq(B)$; and 2) For all agents $i \in N$ and all strategies $v_i' \in V_i \setminus \set{v_i}$ such that $(\acv_{-i},v_i') \obsneq \acv$, we have $(\acv_{-i},v_i') \rightrightarrows_i \acv$.
\end{corollary}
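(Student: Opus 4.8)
My plan is to derive Corollary~\ref{corollary:hard} directly from Proposition~\ref{proposition:soft}, using the defining identity $\hard{B,\calO} = \nasheq(B) \setminus \soft{B,\calO}$. Since condition~1 of the corollary, $\acv \in \nasheq(B)$, occurs in the definitions of both $\hard{B,\calO}$ and $\soft{B,\calO}$, it suffices to show that, \emph{for $\acv \in \nasheq(B)$}, we have $\acv \notin \soft{B,\calO}$ if and only if condition~2 holds. The first step is to note that any Nash equilibrium of $B$ is also a Nash equilibrium of the costless game $B^0$ -- a deviation that turns a loser into a winner strictly increases that agent's utility in both games -- so $\acv \in \nasheq(B^0) = \init{B}$, and hence condition~1 of Proposition~\ref{proposition:soft} is automatically satisfied. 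Negating that proposition then yields: $\acv \notin \soft{B,\calO}$ iff for every $i \in N$ and every $v_i' \in V_i \setminus \set{v_i}$ with $(\acv_{-i},v_i') \obsneq \acv$, it is \emph{not} the case that $i \in W(\acv_{-i},v_i') \Leftrightarrow i \in W(\acv)$.

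The second step is to re-express $(\acv_{-i},v_i') \rightrightarrows_i \acv$ in terms of winner sets. Fix $i$ and $v_i' \in V_i \setminus \set{v_i}$ with $(\acv_{-i},v_i') \obsneq \acv$; the two profiles are distinct, so Proposition~\ref{proposition:induce_dev} applies, and because they are distinguishable its condition~1 holds, so $\to_i$ between them reduces to the initial-deviation relation in each direction. Thus $(\acv_{-i},v_i') \to_i \acv$ is equivalent to $i \in W(\acv_{-i},v_i') \Rightarrow i \in W(\acv)$, and $\acv \to_i (\acv_{-i},v_i')$ is equivalent to $i \in W(\acv) \Rightarrow i \in W(\acv_{-i},v_i')$. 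Unfolding the definition of a hard deviation ($\acv \rightrightarrows_i \acv'$ iff $\acv \to_i \acv'$ but not $\acv' \to_i \acv$), it follows that $(\acv_{-i},v_i') \rightrightarrows_i \acv$ holds iff $i \in W(\acv)$ and $i \notin W(\acv_{-i},v_i')$.

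The last step -- the only one that is not bookkeeping -- is to reconcile the biconditional-negation of step~1 with the one-sided condition obtained in step~2. The failure of $i \in W(\acv_{-i},v_i') \Leftrightarrow i \in W(\acv)$ means exactly one of the two memberships holds, giving two cases: (a) $i \in W(\acv)$ and $i \notin W(\acv_{-i},v_i')$, or (b) $i \notin W(\acv)$ and $i \in W(\acv_{-i},v_i')$. I would eliminate case~(b) using the hypothesis $\acv \in \nasheq(B)$: in case~(b), switching agent~$i$'s strategy to $v_i'$ moves $i$'s utility from the range $[-c_i^*,0]$ into $[1,c_i^*+1]$, a beneficial deviation, contradicting that $\acv$ is a Nash equilibrium of $B$. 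Hence, for $\acv \in \nasheq(B)$, the negated biconditional is equivalent to case~(a), which by step~2 is exactly $(\acv_{-i},v_i') \rightrightarrows_i \acv$. Chaining the equivalences gives $\acv \notin \soft{B,\calO}$ iff condition~2 of the corollary holds, which, together with $\hard{B,\calO} = \nasheq(B) \setminus \soft{B,\calO}$, completes the proof. The only genuinely non-routine point is recognising that the apparent asymmetry between the symmetric biconditional in Proposition~\ref{proposition:soft} and the directed hard-deviation relation in the corollary is resolved precisely by the assumption that $\acv$ is already a Nash equilibrium of $B$.
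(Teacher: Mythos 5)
Your proof is correct and follows essentially the same route as the paper, which derives the corollary by complementing Proposition~\ref{proposition:soft} within $\nasheq(B)$ via the identity $\hard{B,\calO} = \nasheq(B) \setminus \soft{B,\calO}$; you simply make explicit the details the paper leaves implicit (the inclusion $\nasheq(B) \subseteq \init{B}$, the translation of the negated biconditional into $(\acv_{-i},v_i') \rightrightarrows_i \acv$ via Proposition~\ref{proposition:induce_dev}, and the elimination of the $i \in L(\acv) \cap W(\acv_{-i},v_i')$ case using the Nash equilibrium hypothesis). No gaps.
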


So far, our analysis has focused on whether or not a single strategy profile is an inducible, hard, or soft equilibrium. However, it is not sufficient in general to consider individual strategy profiles in this way, as there may be several strategy profiles that the principal would like to induce or eliminate. The more general problem faced by the principal is thus how it can determine whether a contract can be designed to jointly eliminate several equilibria in a game.

Therefore, we will extend the concept of eliminability to sets of strategy profiles. Given a set $X \subseteq V$ of strategy profiles, we say that $X$ is \textit{eliminable} if there is a contract $\kappa$ such that $X \cap \nasheq(B^{\kappa}) = \emptyset$. Any such contract $\kappa$ is then said to \textit{eliminate} $X$.
To characterise the ability of the principal to eliminate sets of equilibria, we will utilise a graph-based approach to represent inducible deviations between strategy profiles. Given a Boolean game $B$ and an observable set $\calO$, we define the \textit{potential deviation graph} of $B$ with respect to $\calO$ to be the directed graph $\calG(B,\calO) = (\calV,E)$, where
\begin{itemize}
    \item $\calV$ is the set of vertices, where each vertex corresponds to a strategy profile $\acv \in V$;
    \item $E = \set{(\acv,\acv') \in \calV \times \calV\ \mid\ \acv \to_i \acv' \mbox{ for some } i \in N}$ is the set of directed edges, where there is an edge from a vertex $\acv \in \calV$ to another vertex $\acv' \in \calV$ if and only if it is the case that $\acv \to_i \acv'$ for some $i \in N$.
\end{itemize}

The potential deviation graph of a game $B$ wrt.\ an observable set $\calO$ represents all possible deviations that could be induced by a contract $\kappa$. 
We say that a subgraph $D = (\calV_{D},E_{D})$ of $\calG(B,\calO)$ is the \textit{deviation graph} induced by a contract $\kappa$ iff it is the case that for every pair $\acv,\acv' \in \calV_D$, we have $(\acv,\acv') \in E_{D} \Leftrightarrow \acv' \pref_i^{\kappa} \acv$. Intuitively, a contract $\kappa$ induces a deviation graph~$D$ when only those inducible deviations that are actually made beneficial for some $i \in N$ under $\kappa$ are included as edges in $D$, along with both of their endpoints. For the purposes of eliminating undesirable equilibria, we will also need to determine, given a deviation graph~$D$ of $\calG(B,\calO)$, whether there exists a contract $\kappa$ such that $D$ is the deviation graph induced by $\kappa$. If such a contract exists, we will say that $D$ is an \textit{inducible deviation graph} of $B$ wrt.\ $\calO$.

We say that a tuple $P = (\acv^1, \acv^2,\ldots, \acv^k)$ is a \textit{deviation path} if it holds that $\acv^1 \to_{i^1} \acv^2 \to_{i^2} \ldots \to_{i^{k-1}} \acv^k$, for some tuple of agents $(i^1,i^2,\ldots,i^{k-1})$. Equivalently, a deviation path may be defined simply as any directed path within the potential deviation graph $\calG(B,\calO)$. A deviation path $(\acv^1, \acv^2,\ldots \acv^k)$ is a \textit{deviation cycle} if $k \geq 2$ and $\acv^1 = \acv^k$. Additionally, we will say that a deviation path $(\acv^1, \acv^2,\ldots, \acv^k)$ \textit{involves agent $i$} if and only if we have $\acv^j \to_i \acv^{j+1}$ for some~$j \in \set{1,\ldots,k-1}$ and $i\in N$. 

Although deviation paths are defined with respect to sequences of strategy profiles that are pairwise related by inducible deviations, it will be remembered that the influence of the principal is restricted to providing incentives only based on what they can observe, rather than the strategy profile that is actually chosen. In particular, given two strategy profiles $\acv, \acv' \in V$ such that $\acv \obsneq \acv'$ and $\acv \to_i \acv'$ for some $i \in N$, the principal cannot design a contract that rewards $i$ for choosing $\acv'$ over $\acv$ directly, but this must instead be done indirectly, by rewarding $i$ sufficiently more under the \textit{observation} $\acv'\resobs$ than the observation $\acv\resobs$. From the point of view of the principal, then, it will sometimes be more useful to reason about 
\textit{observed deviation paths}, which we define as a tuple $P_{\calO} = (\obs^1,\obs^2,\ldots,\obs^k)$ such that there exists a set $ \set{\acv^1,\acv^2,\ldots,\acv^k} \subseteq \calV$ where for all $j \in \set{1,\ldots,k}$, we have 1) $\acv^j\resobs = \obs^j$ and 2) if $j < k$, then $(\acv^j,\acv^{j+1'}) \in E_D$ for some $\acv^{j+1'} \obseq \acv{j+1}$. In other words, an observed deviation path need not be an actual deviation path in $\calG(B,\calO)$; it simply needs to look like one from the principal's perspective. Given this, a deviation path $P = (\acv^1, \acv^2,\ldots, \acv^k)$ is said to contain an \textit{observed deviation cycle} if $\acv^1 \obseq \acv^j$, for some $j \in \set{2,\ldots,k}$. Terminology regarding the involvement of agents in observed deviation paths/cycles is extended in the natural way -- an observed deviation path $P_{\calO} = (\obs^1,\obs^2,\ldots,\obs^k)$ \textit{involves agent} $i$ iff $\acv^{j} \to_i \acv^{j+1}$ for some $j \in \set{1,\ldots,k-1}$ such that $\acv^j\resobs = \obs^j$ and $\acv^{j+1}\resobs = \obs^{j+1}$. 


\begin{figure}
    \centering
    \begin{tabular}{| c | c | c | c |}
        \hline
        Valuation $\acv$ & Observation $\obs$ & $\acv \in \nasheq(B)$? & $\acv \models \phi$? \hfill\\
        \hline
        (0,0,0) & 0 & \xmark & \xmark\\ 
        (0,0,1) & 0 & \xmark & \xmark\\ 
        (0,1,0) & 0 & \cmark & \xmark\\ 
        (0,1,1) & 0 & \xmark & \xmark\\ 
        (1,0,0) & 1 & \cmark & \xmark\\ 
        (1,0,1) & 1 & \xmark & \xmark\\ 
        (1,1,0) & 1 & \cmark & \xmark\\ 
        (1,1,1) & 1 & \cmark & \cmark\\ 
        \hline
    \end{tabular}\\
    \vspace{5pt}
    \includegraphics[scale=0.45]{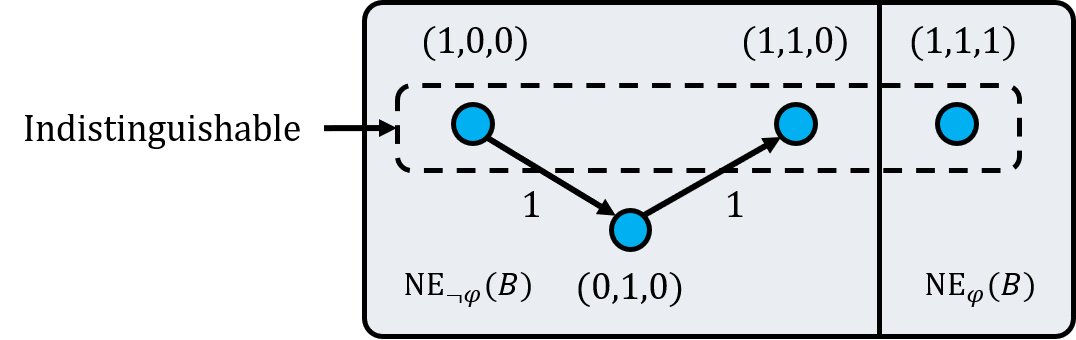}
    \caption{Example of a two-player costless game in which $\Phi = \set{p_1,p_2,p_3}$, $\Phi_1 = \set{p_1,p_2}$, $\Phi_2 = \set{p_3}$, $\gamma_1 = p_1 \vee p_2$, $\gamma_2 = p_3 \Rightarrow (p_1 \wedge p_2)$, $\calO = \set{p_1}$, $\obs_0 = 0$, $\obs_1 = (1)$ and $\phi = p_1 \wedge p_2 \wedge p_3$. The table on the top illustrates different properties of each strategy profile, and the diagram on the bottom depicts a deviation path which contains an observed deviation cycle.}
    \label{fig:deviation_cycle}
\end{figure}

\begin{example}
To illustrate some of the recently introduced concepts, consider the two-player game in Figure \ref{fig:deviation_cycle}. The diagram at the bottom depicts a deviation path $P = ((1,0,0),(0,1,0),(1,1,0))$ consisting of three Nash equilibria of $B$. Note that $P$ involves only one agent and is not a deviation cycle. Nevertheless, $P$ contains an \textit{observed} deviation cycle that involves only player $1$, as corresponding observed deviation path is $P_{\calO} = (\obs_1 ,\obs_0 ,\obs_1)$. We will see shortly that this property is crucial in determining whether a set of strategy profiles can be eliminated or not.
\end{example}

Now, we present a characterisation of the eliminability of sets of initial equilibria in a game:

\begin{theorem}
Let $B$ be a Boolean game, $\calO$ an observable set, and $X \subseteq \init{B}$ a non-empty set of initial equilibria. Then, $X$ is eliminable if and only if there exists a deviation graph $D = (\calV_D, E_D)$ of $B$ with respect to $\calO$ that satisfies the following properties: 1) For every $\acv \in X$, there is some $\acv' \in \calV_D$ such that $(\acv,\acv') \in E_D$; and 2) Every observed deviation cycle in $D$ involves at least two agents;
\end{theorem}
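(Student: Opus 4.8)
The plan is to prove the two implications separately; the backward direction carries most of the weight, and the unifying idea is that condition~2 is precisely the solvability condition for a system of per-observation payments, so it is really a potential-function / topological-sort argument. Note first that, by Proposition~\ref{proposition:set_relations}, $\nasheq(B^{\kappa})\subseteq\ind{B,\calO}\subseteq\init{B}$ for every $\kappa$, so any profile outside $\init{B}$ is eliminated automatically; this is why restricting to $X\subseteq\init{B}$ loses nothing, and it lets us focus entirely on the combinatorial structure of $D$.

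For the backward direction, suppose $D=(\calV_D,E_D)$ satisfies 1 and 2. First I would observe that 2 forces every edge of $D$ to be observationally distinguishing: a single edge $\acv\to_i\acv'$ of $D$ with $\acv\obseq\acv'$ is, on its own, an observed deviation cycle involving only agent~$i$. More generally, 2 says that for each agent~$i$ the digraph on the observation set $\obsset$ whose arcs $\obs\to\obs'$ record that some $i$-labelled edge of $D$ runs from an $\obs$-profile to an $\obs'$-profile is acyclic, hence admits a rank function $r_i:\obsset\to\Nat$ that strictly increases along every such arc. I would then set $\kappa_i(\obs)=R\cdot r_i(\obs)$ for a rational $R$ exceeding every $c_j^{*}$. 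Using Proposition~\ref{proposition:induce_dev} (every edge of $D$ is an initial deviation, so winner-status is non-decreasing along it) and the three compatible goal-satisfaction cases (both win, both lose, loser-to-winner), a direct computation of $u_i^{\kappa}(\acv')-u_i^{\kappa}(\acv)$ shows that the payment gap of at least $R$ dominates any cost difference of at most $c_i^{*}$, so every edge of $D$ becomes a strictly beneficial deviation under $\kappa$. In particular every $\acv\in X$ has a beneficial deviation under $\kappa$, hence lies outside $\nasheq(B^{\kappa})$, so $X\cap\nasheq(B^{\kappa})=\emptyset$ and $X$ is eliminable.

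For the forward direction, let $\kappa$ eliminate $X$, and let $D$ be the subgraph of the deviation graph induced by $\kappa$ that keeps, for each $\acv\in X$, one outgoing (beneficial) edge together with its endpoint. Property~1 is then immediate. For property~2, suppose some observed deviation cycle in $D$ involved only agent~$i$; along it only $i$'s strategy changes, so the other agents' choices — and hence $i$'s payment, which depends only on the observation — are pinned down and coincide at the cycle's start and at its observationally-equal return point, while $u_i^{\kappa}$ strictly increases at every step. Combining the monotonicity of winner-status along a deviation path (Proposition~\ref{proposition:induce_dev}) with the fact that the cycle identifies its endpoints observationally yields a net change of $u_i^{\kappa}$ around the cycle that is simultaneously zero and strictly positive — a contradiction. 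Hence every observed deviation cycle in $D$ involves at least two agents.

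The step I expect to be the main obstacle is making the bridge between ``observed deviation cycle involving only agent~$i$'' and ``cycle in the per-agent observation digraph'' fully rigorous, since the definition of observed deviation paths permits a deviation to jump between observationally-equivalent profiles; carrying this through cleanly, together with the bookkeeping of the three goal-satisfaction cases in $u_i^{\kappa}$ and the choice (in the forward direction) of which beneficial deviation to retain for each $\acv\in X$ so that no one-agent observed cycle is created, is where the care is needed. Everything else reduces to Propositions~\ref{proposition:induce_dev}--\ref{proposition:soft} and the definitions of $u_i^{\kappa}$ and of (observed) deviation paths.
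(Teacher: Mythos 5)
Your backward direction is essentially the paper's own construction in different notation: your rank function $r_i$ on the per-agent observation digraph is exactly the paper's $\ell_i - d_i(\obs)$ (longest remaining observed deviation path involving only $i$), and the payment $R\cdot r_i(\obs)$ with $R>c_i^*$ plays the role of the paper's $(\ell_i - d_i(\obs))(c_i^*+1)$; that half is correct.

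The forward direction has a genuine gap, and it sits exactly where you flagged it. Your contradiction rests on the claim that the net change of $u_i^{\kappa}$ around a single-agent observed deviation cycle is zero because ``$i$'s payment depends only on the observation.'' But $u_i^{\kappa}$ is not determined by the observation: the cycle's endpoints $\acv^1$ and $\acv^k$ are \emph{distinct} profiles with $\acv^1 \obseq \acv^k$, differing in $i$'s unobservable variables, so $c_i(\acv^1)$ and $c_i(\acv^k)$ (and in principle $i$'s winner status) need not coincide. If, say, $c_i(\acv^k)<c_i(\acv^1)$ with the same winner status, then $u_i^{\kappa}(\acv^k)>u_i^{\kappa}(\acv^1)$ is perfectly consistent with equal payments and no contradiction arises -- such a configuration is inducible. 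The paper closes this hole with an extra step you are missing: it uses the fact that the two observationally equivalent endpoints are initial equilibria differing only in $i$'s strategy to deduce $\acv^1 \leftrightarrows_i \acv^k$ and $c_i(\acv^1)=c_i(\acv^k)$, so that equal payments really do force $u_i^{\kappa}(\acv^1)=u_i^{\kappa}(\acv^k)$, which then contradicts the strict preference obtained by chaining the edges.

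A second, compounding problem is your choice of $D$. Keeping one beneficial outgoing edge per element of $X$ does not avoid single-agent observed cycles, because the definition of observed deviation paths permits jumping between observationally equivalent profiles at each step: two retained edges $\acv^1\to_i \vec{w}^1$ and $\acv^2\to_i \vec{w}^2$ with $\vec{w}^1\obseq\acv^2$ and $\vec{w}^2\obseq\acv^1$ already constitute a single-agent observed deviation cycle, even though all four profiles are distinct and both deviations are simultaneously beneficial under $\kappa$. So you need both (i) an argument that the edges can be selected so that no such spurious cycle appears, and (ii) the cost/winner-status identification at observationally equivalent endpoints; your proposal supplies neither, and ``only $i$'s strategy changes along the cycle'' is itself false once teleportation is taken into account. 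You correctly identified this as the hard point, but identifying it is not the same as resolving it.
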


\begin{proof}[Proof Sketch.]
The forward direction can be shown via proof by contrapositive, which proceeds under a case analysis. If property 1) does not hold for a deviation graph $D$, then there is some strategy profile in $X$ which will not be eliminated by inducing $D$. If property 2) does not hold, then it is impossible to induce $D$, because doing so would lead to a contradiction by transitivity of the preference relation. The backward direction can be shown by constructing a contract $\kappa$ that assigns rewards using a `backward induction' procedure, which induces the deviation graph $D$. It can then be easily checked using the properties of $D$ that $\kappa$ eliminates $X$.
\end{proof}

\subsection*{Contract Design for Logical Objectives}

Returning to the problem of designing contracts for the guaranteed satisfaction of logical objectives, the natural question to ask is, given a Boolean game $B$, does there exist a contract $\kappa$ such that the principal can ensure that their goal~$\phi$ has been satisfied on some or every Nash equilibrium of the game $B^{\kappa}$? The E-Nash version of this problem is defined as follows:

\begin{quote}
\underline{\textsc{E-Nash Contractibility}}:\\
\emph{Given}: Game $B$, observable set $\calO$, formula $\phi$.\\
\emph{Question}: Does there exist a contract $\kappa$ such that for some $\obs \in \obsset$ we have $\env{B^{\kappa},\obs,\phi} \neq \emptyset$?
\end{quote}

Note that in the problem of contract design, the principal is not given an observation to begin with, but must consider \emph{all possible observations} that are consistent with at least one Nash equilibrium of a given Boolean game
-- since the agents' utilities are affected by the imposed contract, their strategies will likewise be influenced by the incentives introduced by the contract and hence, this component must be specified \emph{before} agents in the game select their strategies. We have found that this problem is no harder than the special case where the principal has full observability within the game~\cite{wooldridge2013incentive}:

\begin{theorem}\label{thm:e-nash-contractibility}
\textsc{E-Nash Contractibility} is $\SIGMA2$-complete.
\end{theorem}

\begin{proof}
For membership in $\SIGMA2$, first observe that the answer to E-Nash Contractibility is ``yes'' if and only if it is the case that $\acv \in \ind{B,\calO} \cap \nasheq_{\phi}(B) \neq \emptyset$, that is, if there exists an inducible equilibrium that satisfies $\phi$. Making use of the characterisation of inducible equilibria in Proposition \ref{proposition:inducible_eqm}, we can guess a strategy profile $\acv$ and then check whether 1) $\acv \in \nasheq_{\phi}(B)$; and 2) For all agents $i \in N$, and all choices $v_i' \in V_i$ such that $(\acv_{-i},v_i') \obseq \acv$ and $i \in W(\acv) \Leftrightarrow i \in W(\acv_{-i},v_i')$, we have that $c_i(\acv_{-i},v_i') \geq c_i(\acv)$. Checking the first condition is a coNP problem \cite{wooldridge2013incentive}. Moreover, checking the second condition is also in coNP -- simply guess an alternative choice $v_i' \in V_i$ for each agent $i \in N$ and then checking whether $(\acv_{-i},v_i') \obseq \acv$, $\acv \leftrightarrows (\acv_{-i},v_i')$, and $c_i(\acv_{-i},v_i') < c_i(\acv)$ can be done in polynomial time. If the answer is ``yes'', then condition 2 is not satisfied. Thus, the overall procedure is in the $\SIGMA2$ complexity class. 

Hardness follows again from the fact that the problem of checking whether or not a Nash equilibrium exists in a cost-free Boolean game is $\SIGMA2$-complete \cite{wooldridge2013incentive}. Given an instance of such a problem, we can check for E-Nash contractibility of the formula $\phi = \top$ in the same game, where the principal is able to observe all of the variables (i.e., $\calO = \Phi$). The answer to the E-Nash contractibility problem will be ``yes'' if and only if a Nash equilibrium exists in the corresponding cost-free Boolean game, by Proposition \ref{proposition:set_relations}.
\end{proof}

Finally, we introduce and settle the complexity of the universal counterpart to the E-Nash Contractibility problem:

\begin{quote}
\underline{\textsc{A-Nash Contractibility}}:\\
\emph{Given}: Game $B$, observable set $\calO$, formula $\phi$.\\
\emph{Question}: Is there a contract $\kappa$ s.t.\ $\nasheq(B^\kappa) \neq \emptyset$ and for all observations $\obs \in \obsset$, if $\consistent{B^{\kappa}} \neq \emptyset $, then $\consistent{B^{\kappa}} \subseteq \nasheq_{\phi}(B^{\kappa})$?
\end{quote}
With this definition in place, we can then show the following result, which sharply contrasts with results appearing in~\cite{wooldridge2013incentive}:\footnote{This contradicts Proposition~14 in \cite{wooldridge2013incentive}. Private communication with the authors has confirmed that there was an error in the original publication~\cite{wooldridge2013incentive}.} 

\begin{theorem}\label{thm:a-nash-contractibility}
\textsc{A-Nash Contractibility} is $\SIGMA3$-complete. 
\end{theorem}

\begin{proof}[Proof Sketch.]
For membership, note that the A-Nash Contractibility problem is equivalent to deciding the following statement: 
%
%
there are $\kappa \in \calK$ and $\acv \in V$ such that for all $\acv' \in V$ we have $ (\acv \in \nasheq(B^\kappa)) \wedge (\acv' \in \nasheq(B^\kappa) \Rightarrow \acv' \models \phi) \ ,$
which is a $\SIGMA3$ predicate, since there are at most an exponential number of polynomial-sized contracts to be checked\footnote{Please see the full proof in the appendix for more details.}, and checking whether $\acv \in \nasheq(B^{\kappa})$ for some $\acv \in V$ and $\kappa \in \calK$ is coNP-complete.

For hardness, we reduce from $\mbox{QSAT}_3$, which is known to be $\SIGMA3$-complete \cite{papadimitriou:94a}. Suppose that we have an instance of $\mbox{QSAT}_3$, which is given by a Boolean formula $\phi$ over a set $X$ of Boolean variables and a partition of $X$ into three sets $X_1, X_2, X_3$. The question is to decide whether $\exists X_1\forall X_2\exists X_3\ \phi$. We transform this into an instance of A-Nash Contractibility by defining a three-agent game $B = (\set{1,2,3},\Phi,(\Phi_i)_{i \in N},(\gamma_i)_{i\in N},(c_i)_{i \in N})$, where: $\Phi = X \cup \set{p,q}$, $\Phi_1 = X_1, \Phi_2 = X_2 \cup \set{p},$ and $\Phi_3 = X_3 \cup \set{q}$; $\gamma_1 = \top$; $\gamma_2 = \neg \phi \vee (p \leftrightarrow q)$; $\gamma_3 = \phi \vee \neg(p \leftrightarrow q)$; For all $\acv \in V$ and $i \in \set{1,2}$, we have $c_i(\acv) = 0$; For all $\acv \in V$, we have $c_3(\acv) = 1$ if $\acv \nvDash \phi$ and is $0$ otherwise.
The objective that the principal wishes to implement is simply $\phi$, and the observable set is given by $\calO = X_1$. The proof is completed by verifying that the answer to an instance of $\mbox{QSAT}_3$ is ``yes'' if and only if the answer to A-Nash Contractibility in the above construction is also ``yes.''
\end{proof}

This result formally establishes the intuitive notion that the task of designing incentives to eliminate \textit{all} undesirable behaviours in a multi-agent system is, in general, significantly more challenging than creating space for a desirable outcome to be chosen among many possible outcomes in the game.


\section{Concluding Remarks}
Boolean games provide a very natural model in which to draw a connection between two previously separate yet closely related areas of study: moral hazard problems (from Economics) and rational verification (from AI verification). By extending the moral hazard problem to a qualitative setting through the use of Boolean variables and propositional logic goals, this framework provides a method for expressing relationships between discrete tasks which may require a threshold of resources (costs) to complete, rather than being tied to a continuous level of effort. Our work develops this connection with a number of contributions: the formulation of a model of multi-agent moral hazard problems with combined qualitative and quantitative preferences, a characterisation of when equilibria can be induced or eliminated, and results settling the computational complexity of the verifiability and contractibility problems associated with these model of games. 

As this article and previous work on moral hazard problems demonstrate, the presence of hidden actions limits a principal's ability to design contracts that successfully align agents' local incentives with the principal's higher-level objectives. This study presents a model which opens up many possible extensions and questions. Further work may refine the model in several ways by introducing an explicit \textit{quantitative utility function for the principal} which is decreasing in contract payments, adding \textit{randomness} to observations, allowing agents to \textit{report} some of their actions, requiring that contracts be \textit{individually rational}, and letting the principal \textit{decide which observations} to make under the assumption that observations are costly. Such extensions would provide further insights into the computational aspects of incentive design in moral hazard settings, which are important concerns in AI research.


\bibliographystyle{named}
\bibliography{ijcai23}

\clearpage

\setcounter{proposition}{0}
\setcounter{theorem}{0}
\setcounter{lemma}{0}
\setcounter{corollary}{0}

\section{Supplementary Material}

\begin{proposition}\label{proposition:nash-verifiability}
\textsc{E-Nash Verifiability} is $\SIGMA2$-complete, and \textsc{A-Nash Verifiability} is $\PI2$-complete.
\end{proposition}
\begin{proof}
We consider \textsc{E-Nash Verifiability} first. For membership in $\SIGMA2$, the problem of verifying whether a candidate strategy profile $\acv$ is a Nash equilibrium of $B$ is coNP-complete \cite{wooldridge2013incentive}. Checking whether $\acv$ is consistent with $\obs$ and satisfies $\phi$ can be done in polynomial time, and thus, membership in $\SIGMA2$ follows.
Hardness follows from the fact that the problem of checking whether a Boolean game with no costs has a pure-strategy Nash equilibrium is $\SIGMA2$-complete \cite{bonzon2006boolean} -- simply set $\obs = \emptyset$, $\obs = \varepsilon$, and $\phi = \top$ in a Boolean game with no costs but the same structure as the one in which we are interested in determining the existence of Nash equilibria.

Now we consider \textsc{A-Nash Verifiability}. For membership, simply note that the A-Nash Verifiability problem is the complement of the E-Nash Verifiability problem, where the set $\env{B, \obs, \neg \phi} \neq \emptyset$ if and only if the answer to A-Nash Verifiability is ``no''.
Similarly, hardness follows by reducing the E-Nash Verifiability problem to the complement of A-Nash verifiability.
\end{proof}

\setcounter{proposition}{2}

\begin{proposition}\label{proposition:induce_dev}
Let $B$ be a game, $\calO \subseteq \Phi$ an observable set, and $\acv,\acv' \in V$ be two distinct strategy profiles such that $\acv' = (\acv_{-i},v_i')$ for some $i \in N$ and $v_i' \in V_i \setminus \set{v_i}$. Then, $\acv \to_i \acv'$ if and only if $\acv \rightharpoonup_i \acv'$ and one of the following conditions hold: 1) $\acv \obsneq \acv'$, or 2) $\acv \obseq \acv'$ and $\acv' \pref_i \acv$.
\end{proposition}

\begin{proof}
For the forward direction, suppose that it is not the case that $\acv \rightharpoonup_i \acv'$. By definition, this means that $i \in W(\acv) \cap L(\acv')$. Since any winner's utility is guaranteed to be positive and any loser's utility is guaranteed to be negative regardless of the contract chosen by the principal, there is no contract $\kappa$ such that $\acv' \pref_i^{\kappa} \acv$. Now suppose that  $\acv \rightharpoonup_i \acv'$, but that neither of two conditions hold, i.e., $\acv \obseq \acv'$ and $\acv \prefeq_i \acv'$. For any contract $\kappa$, the difference in utility $\Delta_i = u_i^{\kappa}(\acv) - u_i^{\kappa}(\acv_{-i},v_i')$ of agent $i$ between $\acv$ and $\acv'$ is
\[
 \Delta_i = \left\{\begin{array}{ll}
c_i(\acv') - c_i(\acv) & i \in W(\acv_{-i},v_i') \cap W(\acv)\\
c_i(\acv') - (1 + c_i^* +\\ \qquad \qquad \kappa_i^* + c_i(\acv)) & i \in W(\acv_{-i},v_i') \cap L(\acv)\\
c_i(\acv') - c_i(\acv) & i \in L(\acv_{-i},v_i') \cap L(\acv).\\
\end{array}\right.
\]
Two observations are in order. Firstly, because the two strategy profiles are indistinguishable, any contract will pay every agent the same amount under both strategy profiles, so the difference in utility between the two is only dependent on the initial cost structure of the game and potentially the highest possible contract payoff $\kappa_i^*$. Secondly, we can infer from the assumption that $\acv \prefeq_i \acv'$ that this difference in utility is non-negative in all three cases. Thus, we have that $\acv \prefeq_i^{\kappa} \acv'$ for all contracts $\kappa$.

For the reverse direction suppose that $\acv \rightharpoonup_i \acv'$ and that the first condition holds. Then, it can be easily verified that any contract $\kappa$ such that $\kappa_i(\acv\resobs) = 0$ and $\kappa_i(\acv'\resobs) = c_i^* + 1$ will satisfy $\acv' \pref_i^{\kappa} \acv$. Now assume instead that $\acv \obseq \acv'$ and $\acv' \pref_i \acv$. Clearly, a null contract $\kappa^0$ that simply offers all agents no payment for all observations satisfies $\acv' \pref_i^{\kappa} \acv$.
\end{proof}

\begin{proposition}\label{proposition:inducible_eqm}
Let $B$ be a game, $\calO \subseteq \Phi$ an observable set, and $\acv$ a strategy profile. Then, $\acv \in \ind{B,\calO}$ if and only if the following conditions hold: 1) $\acv \in \init{B}$; and 2) For all agents $i \in N$, and all choices $v_i' \in V_i$ such that $(\acv_{-i},v_i') \obseq \acv$ and $i \in W(\acv) \Leftrightarrow i \in W(\acv_{-i},v_i')$, we have that $c_i(\acv_{-i},v_i') \geq c_i(\acv)$.
\end{proposition}

\begin{proof}
For the forward direction, suppose that $\acv \notin \textsc{init}(B)$. Then, there is some agent $i \in L(\acv)$ and an alternative choice $v_i' \in V_i$ such that $i \in W(\acv_{-i},v_i')$. Thus, under any contract $\kappa$, we will have $u_i^{\kappa}(\acv) \leq 0$ and $u_i^{\kappa}(\acv_{-i},v_i')  \geq 1$, so $\acv \notin \nasheq(B^{\kappa})$. Since this holds for all contracts, we can conclude that $\acv$ is not an inducible equilibrium of $B$.

Now assume that the first condition is satisfied, but for some agent $i \in N$ and some $v_i' \in V_i$ such that $(\acv_{-i},v_i') \obseq \acv$ and $i \in W(\acv) \Leftrightarrow i \in W(\acv_{-i},v_i')$, it holds that $c_i(\acv_{-i},v_i') < c_i(\acv)$. First observe that under the first condition, it must be the case that $i \in W(\acv_{-i},v_i') \Rightarrow i \in W(\acv)$, otherwise $i$ would have a beneficial deviation from $\acv$ to $(\acv_{-i},v_i')$. Now, because $(\acv_{-i},v_i') \obseq \acv$, any contract $\kappa$ will pay the same amount to $i$ under $\acv$ and $(\acv_{-i},v_i')$. Thus, taking the difference in utility of agent $i$ under any contract $\kappa$ and using the assumption that $i \in W(\acv) \Leftrightarrow i \in W(\acv_{-i},v_i')$, we have $u_i^{\kappa}(\acv) - u_i^{\kappa}(\acv_{-i},v_i') = c_i(\acv_{-i},v_i') - c_i(\acv)$. Under the other assumption that $c_i(\acv_{-i},v_i') < c_i(\acv)$, it follows that $u_i^{\kappa}(\acv) < u_i^{\kappa}(\acv_{-i},v_i')$ and hence, $i$ has a beneficial deviation from $\acv$ to $(\acv_{-i},v_i')$. Since this holds for any contract, we can conclude that $\acv$ is not an inducible equilibrium of $B$.

For the reverse direction, assume that the two conditions hold. Then, consider the contract $\kappa$ such that for all agents $i \in N$ and observations $\obs \in \obsset$, we have: 
\[
\kappa_i(\obs) = \left\{\begin{array}{ll}
c_i^* + 1  & \mbox{if $\acv\resobs = \obs$}\\
0 & \mbox{otherwise.}
\end{array}\right.
\]
What remains is to show that $\acv \in \nasheq(B^{\kappa})$. Recall that because $\acv \in \init{B}$ by assumption, we have $i \in W(\acv_{-i},v_i') \Rightarrow i \in W(\acv)$, for all agents $i \in N$ and all strategies $v_i' \in V_i$. Thus, the only reason that some agent could have a beneficial deviation from $v_i$ to some $v_i'$ is if they achieve a higher net payoff, being the difference between their reward under the contract $\kappa_i$ and the cost $c_i$ that they incur. We can consider two classes of possible cases for deviation $v_i' \in V_i \setminus \set{v_i}$ by any given agent $i \in N$: cases where $(\acv_{-i},v_i') \obsneq \acv$ and cases where $(\acv_{-i},v_i') \obseq \acv$. In the first case, we know that $\kappa_i(\acv\resobs) = c_i^* + 1$ and $\kappa_i(\acv'\resobs) = 0$, so the difference in utility $\Delta_i = u_i^{\kappa}(\acv) - u_i^{\kappa}(\acv_{-i},v_i')$ for agent $i$ is
\[\resizebox{0.92\width}{0.92\height}{$
\Delta_i = \left\{\begin{array}{ll}
1 + c_i^* - c_i(\acv) + c_i(\acv_{-i},v_i') & i \in W(\acv_{-i},v_i') \cap W(\acv)\\
3 + 3c_i^* - c_i(\acv) + c_i(\acv_{-i},v_i') & i \in L(\acv_{-i},v_i') \cap W(\acv)\\
1+ c_i^* - c_i(\acv) + c_i(\acv_{-i},v_i') & i \in L(\acv_{-i},v_i') \cap L(\acv).\\
\end{array}\right.
$}\]
Note that the case where $i \in W(\acv_{-i},v_i') \cap L(\acv)$ is not possible because $\acv \in \init{B}$ by assumption. Furthermore, in all of these possibilities, the difference in utility is non-negative and hence, no agent $i \in N$ has a beneficial deviation from $\acv$ to a distinguishable strategy profile $(\acv_{-i},v_i')$. Now, for the case where $(\acv_{-i},v_i') \obseq \acv$, note that the contract payoff for both strategy profiles is the same, giving a difference in utility of
\[\resizebox{0.92\width}{0.92\height}{$
\Delta_i = \left\{\begin{array}{ll}
c_i(\acv_{-i},v_i') - c_i(\acv) & i \in W(\acv_{-i},v_i') \cap W(\acv)\\
2 + 2c_i^* - c_i(\acv) + c_i(\acv_{-i},v_i') & i \in L(\acv_{-i},v_i') \cap W(\acv)\\
 c_i(\acv_{-i},v_i') - c_i(\acv) & i \in L(\acv_{-i},v_i') \cap L(\acv).\\
\end{array}\right.$
}\]
Under the second condition, all three terms are strictly non-negative and hence, there is no beneficial deviation for any agent $i \in N$ from $\acv$ under $\kappa$. Thus, $\acv \in \ind{B,\calO}$.
\end{proof}

\begin{proposition}\label{proposition:soft}
Let $B$ be a game, $\calO \subseteq \Phi$ an observable set, and $\acv$ a strategy profile. Then, $\acv \in \soft{B,\calO}$ if and only if the following conditions hold:
\begin{itemize}
    \item $\acv \in \init{B}$, and
    \item There is an agent $i \in N$ and a strategy $v_i' \in V_i \setminus \set{v_i}$ such that $(\acv_{-i},v_i') \obsneq \acv$ and $i \in W(\acv_{-i},v_i') \Leftrightarrow i \in W(\acv)$.
\end{itemize}
\end{proposition}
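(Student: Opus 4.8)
The plan is to read $\soft{B,\calO}=\nasheq(B)\setminus\hard{B,\calO}$ at face value: $\acv\in\soft{B,\calO}$ says exactly that $\acv$ is a Nash equilibrium of $B$ which \emph{some} contract manages to destroy. So I would fix a strategy profile $\acv$, carry $\acv\in\nasheq(B)$ as the working hypothesis (it is part of the definition of $\soft{B,\calO}$, and it already yields condition~(1), since $B=B^{\kappa^0}$ for the null contract $\kappa^0$ and hence $\acv\in\nasheq(B)\subseteq\init B$ by Proposition~\ref{proposition:set_relations}), and prove that ``$\exists\kappa:\acv\notin\nasheq(B^{\kappa})$'' is equivalent to condition~(2) through a short chain of rewrites that settles both directions of the iff at once.

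\emph{Step 1.} Unfold: $\exists\kappa:\acv\notin\nasheq(B^{\kappa})$ iff there are $i\in N$, $v_i'\in V_i\setminus\set{v_i}$, and a contract $\kappa$ with $(\acv_{-i},v_i')\pref_i^{\kappa}\acv$, i.e.\ iff $\acv\to_i(\acv_{-i},v_i')$ for some $i$ and $v_i'$ --- this is just the definition of an inducible deviation. \emph{Step 2.} Apply Proposition~\ref{proposition:induce_dev} to rewrite $\acv\to_i(\acv_{-i},v_i')$ as ``$\acv\rightharpoonup_i(\acv_{-i},v_i')$, and either $(\acv_{-i},v_i')\obsneq\acv$, or $(\acv_{-i},v_i')\obseq\acv$ with $(\acv_{-i},v_i')\pref_i\acv$''. \emph{Step 3.} Use $\acv\in\nasheq(B)$ twice. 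First, it kills the second disjunct: a $v_i'$ with $(\acv_{-i},v_i')\pref_i\acv$ is a beneficial deviation in $B$, contradicting $\acv\in\nasheq(B)$; so only $(\acv_{-i},v_i')\obsneq\acv$ survives. Second, it upgrades $\rightharpoonup_i$ to the two-sided winner condition in~(2): the clause $\acv\rightharpoonup_i(\acv_{-i},v_i')$ gives $i\in W(\acv)\Rightarrow i\in W(\acv_{-i},v_i')$, while the reverse implication holds automatically, since otherwise $i$ would move from loser ($u_i\le 0$) to winner ($u_i\ge 1$) --- again a beneficial deviation in $B$. Steps~1--3 together establish ``$\acv$ eliminable'' $\Leftrightarrow$ condition~(2), which is the Proposition.

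The two concrete witnesses drop out of this chain. In the forward direction the contract that eliminates $\acv$ hands over the agent $i$ and alternative choice $v_i'$ needed in~(2). In the reverse direction, given~(2), the eliminating contract is the explicit one from the backward part of Proposition~\ref{proposition:induce_dev}: pay agent $i$ the amount $c_i^*+1$ on the observation $(\acv_{-i},v_i')\resobs$ and nothing on $\acv\resobs$ (and nothing to anyone else), which makes the deviation strictly beneficial precisely because the two profiles are distinguishable.

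The step needing the most care is Step~3 --- making sure the hypothesis $\acv\in\nasheq(B)$ is invoked legitimately in both of its uses, and that Proposition~\ref{proposition:induce_dev} genuinely applies (the two profiles must differ in exactly one agent's choice, which holds by construction). There is also a minor delicacy to flag: in the ``if'' direction the stated condition~(1), $\acv\in\init B$, is weaker than the $\acv\in\nasheq(B)$ that $\soft{B,\calO}$ requires, so one should keep $\acv\in\nasheq(B)$ as the standing hypothesis and regard~(1) as the (automatic) consequence extracted in the ``only if'' direction; this is harmless since every soft equilibrium is by definition a Nash equilibrium. Apart from that, the argument is a routine unfolding of definitions supported by Propositions~\ref{proposition:set_relations} and~\ref{proposition:induce_dev}.
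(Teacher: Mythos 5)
Your proof is correct and follows essentially the same route as the paper's: both reduce membership in $\soft{B,\calO}$ to the existence of a single distinguishable deviation $(\acv_{-i},v_i')\obsneq\acv$ that preserves agent $i$'s winner status, and both eliminate $\acv$ with the contract paying $c_i^*+1$ on the observation $(\acv_{-i},v_i')\resobs$ and $0$ on $\acv\resobs$; your version merely packages the case analysis by citing Proposition~\ref{proposition:induce_dev} where the paper's proof redoes it inline. The delicacy you flag is genuine --- the paper's own reverse direction also concludes $\acv\in\soft{B,\calO}$ from conditions (1) and (2) alone, even though $\soft{B,\calO}\subseteq\nasheq(B)$ while $\acv\in\init{B}$ does not imply $\acv\in\nasheq(B)$ (a purely cost-reducing, goal-status-preserving deviation may already exist in $B$) --- so carrying $\acv\in\nasheq(B)$ as the standing hypothesis is the right repair.
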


\begin{proof}
For the forward direction, it is immediate that if $\acv \notin \init{B}$, then $\acv \notin \soft{B,\calO} \subseteq \init{B}$. Hence, assume that $\acv \in \init{B}$ and suppose that for all agents $i \in N$ and all strategies $v_i \in V_i \setminus \set{v_i}$ such that $(\acv_{-i},v_i') \obsneq \acv$, it is not the case that $i \in W(\acv_{-i},v_i') \Leftrightarrow i \in W(\acv)$. Now, for a given agent $i \in N$, it is not possible that $i \in L(\acv) \cap W(\acv_{-i},v_i')$ because otherwise, $\acv$ would not be an initial equilibrium. Thus, it must be that $i \in W(\acv) \cap L(\acv_{-i},v_i')$, so no contract $\kappa$ could convince any agent $i$ to deviate to some distinguishable strategy profile $(\acv_{-i},v_i') \obsneq \acv$. Also, because $\acv \in \init{B}$, no agent has a beneficial deviation to any indistinguishable strategy profile. Moreover, because no contract can induce different costs for indistinguishable strategy profiles, there is no contract that can induce a beneficial deviation from $\acv$ for any agent. Thus, we have that $\acv \notin \soft{B,\calO}$.

For the reverse direction, assume that both conditions hold. Then, it is easy to check that any contract $\kappa$ such that $\kappa_i(\acv_{-i},v_i') = c_i^* + 1$ and $\kappa_i(\acv) = 0$ induces a beneficial deviation for $i$ from $\acv$ to $(\acv_{-i},v_i')$. Thus, it follows that $\acv \in \soft{B,\calO}$.
\end{proof}

\begin{theorem}\label{thm:eliminable}
Let $B$ be a Boolean game, $\calO$ an observable set, and $X \subseteq \init{B}$ a non-empty set of initial equilibria. Then, $X$ is eliminable if and only if there exists a deviation graph $D = (\calV_D, E_D)$ of $B$ with respect to $\calO$ that satisfies the following properties: 1) For every $\acv \in X$, there is some $\acv' \in \calV_D$ such that $(\acv,\acv') \in E_D$; and 2) Every observed deviation cycle in $D$ involves at least two agents;
\end{theorem}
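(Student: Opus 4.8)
The plan is to derive the biconditional from a single lemma: a contract $\kappa$ eliminates $X$ if and only if one can exhibit a deviation graph $D$ of $B$ with respect to $\calO$ that (a)~is induced by some contract, (b)~has an outgoing edge from every $\acv\in X$ (this is property~1), and (c)~has no observed deviation cycle involving a single agent (this is property~2, which is exactly the condition making (a)~possible). Granting this, the forward implication is handled by reading off such a $D$ from an eliminating contract, and the backward implication by synthesising a contract from a given $D$ with properties~1 and~2.

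\textbf{Forward direction.} Starting from a contract $\kappa$ that eliminates $X$, I would form the deviation graph $D_\kappa$ it induces (on all vertices). Property~1 is immediate: every $\acv\in X$, not being a Nash equilibrium of $B^\kappa$, admits a beneficial deviation $(\acv_{-i},v_i')\pref_i^{\kappa}\acv$, which is precisely an outgoing edge of $\acv$ in $D_\kappa$. For property~2 the argument is by contradiction: an observed deviation cycle in $D_\kappa$ involving only one agent $i$ would be realised by profiles $\acv^1\to_i\cdots\to_i\acv^j$ with $\acv^1\obseq\acv^j$ along which, edge by edge, $u_i^{\kappa}$ strictly increases; since $\acv^1$ and $\acv^j$ carry the same observation the contract pays $i$ the same amount at the two ends, and since $i$'s winner/loser status is monotone along a chain of initial deviations (Proposition~\ref{proposition:induce_dev}), chaining the strict improvements around this loop contradicts transitivity of $\pref_i^{\kappa}$. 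I expect this to be the delicate part of the direction: one must realise the \emph{observed} cycle by honest profiles and carefully control $i$'s winner/loser status, the crucial leverage being that every payment the contract makes around the cycle depends only on the repeating observations. Hence if no deviation graph satisfies both properties, then for every $\kappa$ the graph $D_\kappa$ must fail property~1, i.e.\ some $\acv\in X$ has no beneficial deviation under $\kappa$ and survives as a Nash equilibrium of $B^\kappa$ — so $X$ is not eliminable.

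\textbf{Backward direction and the main obstacle.} Given a deviation graph $D=(\calV_D,E_D)$ satisfying 1 and 2, I would synthesise a contract $\kappa$ whose induced deviation graph is $D$; property~1 then forces every $\acv\in X$ to have a beneficial deviation under $\kappa$, so $X\cap\nasheq(B^\kappa)=\emptyset$. The structural consequence of property~2 I would use is that, for each single agent $i$, the $i$-labelled edges of $D$ that join observationally distinct profiles project to an \emph{acyclic} relation on $\obsset$ (a cycle there would be a single-agent observed deviation cycle). Processing the observations in reverse topological order of this relation, a ``backward induction'' assignment fixes each $\kappa_i(\obs)$ just large enough, relative to the cost range $c_i^*$, to turn precisely the intended $i$-edges of $D$ into beneficial deviations while keeping every non-edge non-beneficial; $i$-edges between observationally equivalent profiles need no contract support, being present for free by Proposition~\ref{proposition:induce_dev}. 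I expect this construction to be the main obstacle: the rewards must be chosen over the \emph{whole} observation space simultaneously so as to switch on every edge of $D$, switch on no further edge, and respect the interaction between an agent's winner/loser status and the lexicographic utility --- all under the rigid constraint that one value $\kappa_i(\obs)$ is shared by every profile consistent with $\obs$ --- and it is precisely property~2 that makes such a globally consistent assignment exist, so the argument hinges on making the acyclicity-to-topological-order reduction, together with the bookkeeping for observationally equivalent profiles and uninvolved agents, fully rigorous.
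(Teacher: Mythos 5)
Your proposal is correct and follows essentially the same route as the paper: the forward direction rests on the same transitivity contradiction for single-agent observed deviation cycles (exploiting that the contract pays identically at observationally equivalent endpoints, via Proposition~\ref{proposition:induce_dev}), and your reverse-topological-order ``backward induction'' over the acyclic projection onto observations is exactly the paper's longest-path potential $\kappa_i(\obs)=(\ell_i-d_i(\obs))(c_i^*+1)$. The only difference is that you aim to make the induced deviation graph \emph{exactly} $D$ (``switch on no further edge''), which is more than the theorem requires --- extra beneficial deviations are harmless, since elimination only needs every $\acv\in X$ to acquire some outgoing beneficial deviation.
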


\begin{proof}
For the forward direction, suppose that for all deviation graphs, one of the two properties does not hold. Let $D$ be any deviation graph of $B$ and suppose that property $(1)$ does not hold. Then, there is some $\acv \in X$ for which there is no $\acv' \in \calV_D$ such that $(\acv,\acv') \in E_D$. Therefore, inducing the deviation graph $D$ will not eliminate $X$. Next, suppose that there is some observed deviation cycle $P = (\acv^1,\acv^{2},\ldots,\acv^k)$ in $D$ that involves only one agent $i \in N$. Then, any contract $\kappa$ that induces the cycle $P$ must be such that $\acv^k \pref_i^{\kappa} \acv^{k-1} \pref_i^{\kappa} \ldots \pref_i^{\kappa} \acv^1$. By transitivity of the preference relation, it must be the case that $\acv^k \pref_i^{\kappa} \acv^1 $. By definition of an observed deviation cycle, we also have that $\acv^k \obseq \acv^1$. Moreover, since both $\acv^1$ and $\acv^k$ are initial equilibria of $B$, it must be the case that $\acv^1 \leftrightarrows_i \acv^k$ and $c_i(\acv^1) = c_i(\acv^k)$, otherwise $i$ would have a beneficial deviation from one to the other, even in the absence of contracts. However, this does not satisfy the conditions of Proposition \ref{proposition:induce_dev} and thus, no contract exists such that $\acv^k \pref_i^{\kappa} \acv^1$, giving rise to a contradiction. Hence, we can conclude that for every deviation graph $D$, either $D$ is not inducible by any contract $\kappa$, or inducing $D$ will not eliminate $X$. Therefore, there is no contract $\kappa$ that eliminates $X$.

For the backward direction, suppose that there exists a deviation graph $D$ which satisfies both of the given properties.
Then, let $\indev{\obs}$ denote the set of agents $i \in N$ for which there is some $\acv \in \consistent{B}$ and $\acv' \in \calV_D$ such that $(\acv',\acv) \in E_D$.
For each $i \in N$, let $\ell_i$ be the length of the longest observed deviation path in $D$ that involves only $i$. Additionally, for all $\obs \in \obsset$, let $d_i(\obs)$ denote the length of the longest observed deviation path in $D$ \textit{starting from} $\obs$ that involves only $i$. Note that because no observed deviation cycle involves only one agent by assumption, $\ell_i$ is finite for all $i \in N$ and $\obs \in \obsset$. Thus, we can construct a contract $\kappa$ as follows:

\[
\kappa_i(\obs) = \left\{\begin{array}{ll}
(\ell_i - d_i(\obs))(c_i^* + 1)  & \mbox{if $i \in \indev{\obs}$}\\
0 & \mbox{otherwise.}
\end{array}\right.
\]

Intuitively, $\kappa$ is designed so that for every edge $(\acv,\acv') \in E_D$, the agent who has an inducible deviation from $\acv$ to $\acv'$ is offered a reward when $\acv'\resobs$ is observed that is significantly greater than when $\acv\resobs$ is observed. More precisely, observe first that for all $(\acv,\acv') \in E_D$, we have $d_i(\acv\resobs) > d_i(\acv'\resobs)$ for the agent $i$ such that $\acv \to_i \acv'$, because $D$ is assumed not to contain any observed deviation cycles involving only one agent. Thus, $\kappa$ offers $i$ at least $c_i^*+1$ more under $\acv'\resobs$ than under $\acv\resobs$. By the definition of inducible deviations, we can then conclude that for any $(\acv,\acv') \in E_D$, we have $\acv' \pref_i^{\kappa} \acv$. Then, by the first property, it holds that for every $\acv \in X$, there is some $\acv' \in V$ such that $\acv' \pref_i^{\kappa} \acv$ for some $i \in N$, so $\kappa$ eliminates $X$.
\end{proof}

\begin{theorem}\label{thm:e-nash-contractibility}
\textsc{E-Nash Contractibility} is $\SIGMA2$-complete.
\end{theorem}

\begin{proof}
For membership in $\SIGMA2$, first observe that the answer to E-Nash Contractibility is ``yes'' if and only if it is the case that $\acv \in \ind{B,\calO} \cap \nasheq_{\phi}(B) \neq \emptyset$, that is, if there exists an inducible equilibrium that satisfies $\phi$. Making use of the characterisation of inducible equilibria in Proposition \ref{proposition:inducible_eqm}, we can guess a strategy profile $\acv$ and then check whether 1) $\acv \in \nasheq_{\phi}(B)$; and 2) For all agents $i \in N$, and all choices $v_i' \in V_i$ such that $(\acv_{-i},v_i') \obseq \acv$ and $i \in W(\acv) \Leftrightarrow i \in W(\acv_{-i},v_i')$, we have that $c_i(\acv_{-i},v_i') \geq c_i(\acv)$. Checking the first condition is a coNP problem \cite{wooldridge2013incentive}. Moreover, checking the second condition is also in coNP -- simply guess an alternative choice $v_i' \in V_i$ for each agent $i \in N$ and then checking whether $(\acv_{-i},v_i') \obseq \acv$, $\acv \leftrightarrows (\acv_{-i},v_i')$, and $c_i(\acv_{-i},v_i') < c_i(\acv)$ can be done in polynomial time. If the answer is ``yes'', then condition 2 is not satisfied. Thus, the overall procedure is in the $\SIGMA2$ complexity class. 

Hardness follows again from the fact that the problem of checking whether or not a Nash equilibrium exists in a cost-free Boolean game is $\SIGMA2$-complete \cite{wooldridge2013incentive}. Given an instance of such a problem, we can check for E-Nash contractibility of the formula $\phi = \top$ in the same game, where the principal is able to observe all of the variables (i.e., $\calO = \Phi$). The answer to the E-Nash contractibility problem will be ``yes'' if and only if a Nash equilibrium exists in the corresponding cost-free Boolean game, by Proposition \ref{proposition:set_relations}.
\end{proof}

\begin{theorem}\label{thm:a-nash-contractibility}
\textsc{A-Nash Contractibility} is $\SIGMA3$-complete. 
\end{theorem}

\begin{proof}
For membership, note that the A-Nash Contractibility problem is equivalent to deciding the following statement: 
\begin{flushleft}
$\qquad \exists \kappa \in \calK, \exists \acv \in V, \forall \acv' \in V,$
\end{flushleft}
\begin{flushright} $ (\acv \in \nasheq(B^\kappa)) \wedge (\acv' \in \nasheq(B^\kappa) \Rightarrow \acv' \models \phi) \ ,$ \end{flushright}
which is a $\SIGMA3$ predicate. This follows from two main observations: firstly, the contract $\kappa$ constructed in the backward direction in the proof of Theorem \ref{thm:eliminable} is guaranteed to eliminate 
$\nasheq_{\neg \phi}(B^{\kappa})$ if $\phi$ is A-Nash contractible. The maximum payment awarded by such a contract is $\ell_i(c_i^* + 1)$, where we recall that $\ell_i$ is the length of the longest observed deviation path in $D$ that involves only $i$. The length of such a path is bounded from above by the longest non-cyclic path through the set of all strategy profiles, which is $2^{\vert \Phi \vert}$. Thus, if $\phi$ is A-Nash contractible, then a contract can be designed to achieve this using payments of at most $2^{\vert \Phi\vert} \cdot \max_{i\in N}(c_i^* + 1)$, which can be represented using a polynomial number of bits. Thus, the size of each contract is polynomial in the size of the game, since the size of the original cost functions $c_i$ are already exponential in $\vert \Phi \vert$. There are at most an exponential number of contracts to check with this maximum value and hence, guessing a contract can be done in NP. Secondly, the problem of checking whether $\acv \in \nasheq(B^{\kappa})$ for some $\acv \in V$ and $\kappa \in \calK$ is in fact coNP-complete. These combined give membership in $\SIGMA3$.

For hardness, we reduce from $\mbox{QSAT}_3$, which is known to be $\SIGMA3$-complete \cite{papadimitriou:94a}. Suppose that we have an instance of $\mbox{QSAT}_3$, which is given by a Boolean formula $\phi$ over a set $X$ of Boolean variables and a partition of $X$ into three sets $X_1, X_2, X_3$. The question is to decide whether $\exists X_1\forall X_2\exists X_3\ \phi$. We transform this into an instance of A-Nash Contractibility by defining a three-agent game $B = (\set{1,2,3},\Phi,(\Phi_i)_{i \in N},(\gamma_i)_{i\in N},(c_i)_{i \in N})$, where: $\Phi = X \cup \set{p,q}$, $\Phi_1 = X_1, \Phi_2 = X_2 \cup \set{p},$ and $\Phi_3 = X_3 \cup \set{q}$; $\gamma_1 = \top$; $\gamma_2 = \neg \phi \vee (p \leftrightarrow q)$; $\gamma_3 = \phi \vee \neg(p \leftrightarrow q)$; For all $\acv \in V$ and $i \in \set{1,2}$, we have $c_i(\acv) = 0$; For all $\acv \in V$, we have $c_3(\acv) = 1$ if $\acv \nvDash \phi$ and is $0$ otherwise.
The objective that the principal wishes to implement is simply $\phi$, and the observable set is given by $\calO = X_1$. We now show that the answer to an instance of $\mbox{QSAT}_3$ is ``yes'' if and only if the answer to A-Nash Contractibility in the above construction is also ``yes.''

Suppose that the answer to the instance of $\mbox{QSAT}_3$ is ``yes'' and let $v_1^* = X_1$ for some assignment $X_1$ that satisfies the $\mbox{QSAT}_3$ instance. Then, consider the contract $\kappa$ such that for all $i \in N$ and $\obs \in \calO$:
\[
\kappa_i(\obs) = \left\{\begin{array}{ll}
c_1^* + 1 & \obs = v_1^* \mbox{ and } i=1 \\
0 & \mbox{otherwise}
\end{array}\right.
\]
Intuitively, $\kappa$ offers agent $1$ a very high reward if they do choose $v_1^*$ and offers no reward to agents $2$ or $3$. Let $\acv$ be a strategy profile such that the following three conditions hold:
\begin{enumerate}
    \item $v_1 = v_1^*$;
    \item $\acv \models \phi$;
    \item $p \leftrightarrow q$.
\end{enumerate}
Such a strategy profile is guaranteed to exist under the assumption that we have a ``yes'' instance of $\mbox{QSAT}_3$. Then, it is easy to check that all agents achieve their goals and minimise their costs under $\kappa$, so there is no incentive to unilaterally deviate. Thus, $\acv \in \nasheq_\phi(B^{\kappa})$. 

Now, suppose that $\acv$ is a strategy profile that does not satisfy one of the three conditions. If $v_1 \neq v_1^*$, then agent $1$ has a beneficial deviation to $v_1^*$ to reduce their cost. Thus, assume that $v_1 = v_1^*$ and suppose that $\acv \nvDash \phi$. Then, because $v_1 = v_1^*$, no matter what strategy agent 2 chooses, agent 3 has a deviation that satisfies $\phi$ and will benefit from doing so because their cost is strictly lower when $\phi$ is satisfied than when it is not. Thus, agent 3 has a beneficial deviation to achieve their goal. Finally, if $\neg (p \leftrightarrow q)$ holds, then agent 2 has a beneficial deviation by setting $p$ to the same value as $q$. Thus, any strategy profile that does not satisfy the three conditions is not a Nash equilibrium of $B^\kappa$. This means all Nash equilibria under $B^{\kappa}$ satisfy $\phi$, and hence, the answer to A-Nash Contractibility is ``yes.''

Suppose then, that the answer to the instance of $\mbox{QSAT}_3$ is ``no.'' Then, it holds that $\forall X_1 \exists X_2 \forall X_3 \neg \phi$. Let $\kappa$ be any contract and let $\resp{v_1} = \set{v_2 \in V_2 \mid \forall v_3 \in V_3\ (v_1,v_2,v_3) \nvDash \phi}$. Then, there exists a strategy profile $\acv$ satisfying the following three conditions:
\begin{enumerate}
    \item $\kappa_1(\acv\resobs) - c_1(\acv) \geq \kappa_1((\acv_{-1},v_1')\resobs) - c_1(\acv_{-1},v_1')$,~~\\ \hfill for all $v_1' \in V_1$;
    \item $v_2 \in \resp{v_1}$;
    \item $\acv \models \neg (\phi \vee (p \leftrightarrow q))$.
\end{enumerate}
To see why $\acv$ necessarily exists, firstly note that because the answer to this instance of $\mbox{QSAT}_3$ is ``no'', then it follows that $\resp{v_1} \neq \emptyset$ for all $v_1 \in V_1$ so condition 2 can always be satisfied for any strategy that agent 1 chooses. Next, observe that a strategy profile which satisfies condition 1 always exists, because $V_1$ is finite. Finally, it is clear that because $v_2 \in \resp{v_1}$, we have $\acv \models \neg \phi$ and no matter what value agent 3 assigns to $q$, agent 2 has an assignment to $p$ so that $\acv \models \neg (p \leftrightarrow q)$.

Now, under $\acv$ and $\kappa$, agent $1$ achieves the lowest cost attainable by the property $1$ of $\acv$ and trivially achieves their goal, so there is no incentive for them to deviate. Moreover, both agents 2 and 3 achieve their goals by property $3$, so there is no incentive for them to deviate in order to achieve their goals. Thus, the only possible deviation that could occur from $\acv$ is one by either agents 2 or 3 in order to reduce their costs, whilst retaining goal satisfaction. However, for any fixed $v_1$ and contract $\kappa$, all choices made by agents 2 and 3 will result in the same contract payoff to them, because none of their actions are observable by the principal. Additionally, agent 3 cannot deviate in order to satisfy $\phi$ and reduce their cost because $v_2 \in \resp{v_1}$. Hence, there is no incentive for either agent to deviate in pursuit of lower costs. Since no beneficial deviation exists from $\acv$, we have $\acv \in \nasheq_{\neg \phi}(B^{\kappa})$. Thus, for all $\kappa$, we have $\nasheq_{\neg \phi}(B^{\kappa}) \neq \emptyset$, so the answer to A-Nash Contractibility is ``no.''
\end{proof}

\end{document}